\documentclass[11pt]{article}

\usepackage[a4paper,margin=28mm]{geometry}
\usepackage{setspace}
\usepackage{graphicx}
\usepackage{mathtools}
\usepackage{amsthm}
\usepackage{amssymb}
\usepackage[authoryear,round]{natbib}
\usepackage{tikz}
\usepackage{pgfplots}
\pgfplotsset{compat=1.18}

\usepackage{booktabs}
\usepackage{multirow}
\usepackage{enumitem}
\usepackage{caption}
\usepackage[hidelinks]{hyperref}
\usepackage{authblk}

\numberwithin{equation}{section}

\newtheorem{theorem}{Theorem}[section]

\newtheorem{lemma}[theorem]{Lemma}
\newtheorem{proposition}[theorem]{Proposition}

\theoremstyle{definition}
\newtheorem{definition}[theorem]{Definition}
\newtheorem{assumption}[theorem]{Assumption}

\theoremstyle{remark}
\newtheorem{remark}[theorem]{Remark}

\title{Generalized Ridge Refitting for the Lasso and Prediction Improvement Bounds}
\author[1]{Guo Liu}
\affil[1]{Department of Pure and Applied Mathematics, Waseda University}
\date{September 10, 2026}

\begin{document}
	
	\maketitle

	\begin{abstract}  
		We study a class of Lasso based estimators obtained by applying a quadratic correction on the Lasso equicorrelation set. The penalty matrix determines both the magnitude and geometry of the correction and contains, among other cases, the isotropic Lasso--Ridge correction, least squares refitting, Gram proportional interpolation between the Lasso and least squares, and coordinate specific penalties. We first derive a closed form representation and isolate the positive gain component of the resulting prediction improvement. We then control the remaining stochastic linear term in expectation by localizing the random signed equicorrelation model around a deterministic reference support. This yields a finite sample expectation bound that explicitly accounts for the randomness induced by Lasso model selection. The resulting decomposition provides a unified framework for understanding when Lasso based quadratic corrections can improve prediction.
	\end{abstract}
	
	\section{Introduction}  
	\label{sec:introduction}  
	
	The Lasso is a central method for prediction and variable selection in high dimensional linear regression \citep{tibshirani1996regression}, but its \(\ell_1\) penalty introduces shrinkage bias. A natural way to reduce this bias is to use the Lasso as a first stage model selection procedure and then perform a second refitting step on the selected variables; see, for example, \citet{chzhen2019lasso}. Here we retain the Lasso fit and its equicorrelation set in the first stage and then apply a structured correction on the selected coordinates. The isotropic Lasso--Ridge correction regularizes this adjustment by a ridge penalty. We replace that isotropic penalty by the general quadratic form  
	\begin{align*}  
		\frac12\|A\delta_E\|_2^2  
		=\frac12\delta_E^\top A^\top A\delta_E,  
	\end{align*}  
	where \(A\) is a penalty matrix and \(E\) is the Lasso equicorrelation set. The resulting estimator is  
	\begin{align*}  
		\hat{\beta}_A=\hat{\beta}_{\mathrm L}+\hat{\delta}_A,  
	\end{align*}  
	with correction  
	\begin{align*}  
		\hat{\delta}_{A,E}  
		=(\Sigma_{n,E}+A^\top A)^{-1}\lambda_{\mathrm L}s,  
		\qquad  
		\hat{\delta}_{A,-E}=0.  
	\end{align*}  
	Thus the first stage Lasso estimator and equicorrelation set remain unchanged, while \(A^\top A\) controls the direction and strength of the second stage adjustment.  
	
	The quadratic formulation unifies a broad family of Lasso based procedures. It includes the original isotropic Lasso--Ridge correction, least squares refitting on the Lasso equicorrelation set, Gram proportional corrections that interpolate between the Lasso and least squares, and coordinate specific or low rank penalties. It also clarifies the local relationship with the relaxed Lasso on portions of the relaxation path where the selected support and sign pattern are unchanged.  
	
	Our main theoretical objective is to characterize the prediction improvement over the Lasso. For the realized data, define  
	\begin{align*}  
		\Delta(\hat{\beta}_{\mathrm L},\hat{\beta}_A)  
		\coloneqq  
		\frac1n\|X\hat{\beta}_{\mathrm L}-X\beta_0\|_2^2  
		-\frac1n\|X\hat{\beta}_A-X\beta_0\|_2^2.  
	\end{align*}  
	We decompose this quantity into a positive gain term generated by correcting Lasso shrinkage and a noise dependent linear form. Under general position, the gain term is strictly positive whenever the realized equicorrelation set is nonempty, and it admits an explicit lower bound when the penalty is controlled relative to the restricted Gram matrix. The remaining difficulty is that the noise term is evaluated at the data dependent equicorrelation set and sign vector.  
	
	To address this selection induced dependence, we introduce the deterministic class of signed equicorrelation models attainable by the Lasso and localize the realized model around a reference support. A maximal inequality on events of small probability then yields a finite sample expectation bound governed by three types of quantities: the probability and combinatorial complexity of localized signed model mismatch, the spectral amplification of the correction on the localized and reference models, and the global complexity and amplification weighted by the probability of leaving the localization event. The analysis does not require exact support recovery. Under suitable sparse design conditions the local spectral factors remain bounded, while the global amplification remains explicit and can also be controlled directly by the quadratic penalty.  
	
	The rest of the paper is organized as follows. Section~\ref{sec:setup} introduces notation and the Lasso quantities used throughout. Sections~\ref{sec:generalized-lasso-ridge} and~\ref{sec:basic-properties} develop the generalized correction and its principal special cases. Section~\ref{sec:prediction-improvement} isolates the positive prediction gain and provides a lower bound. Section~\ref{sec:refined-localization-lasso-correction} controls the noise dependent term by localization and relates the required assumptions to sparse eigenvalue conditions and existing Lasso localization results.
	
	\section{Setup and Notation}
	\label{sec:setup}
	
	For a positive integer \(p\), let \([p]\coloneqq\{1,\ldots,p\}\), and denote the cardinality of a finite set \(S\) by \(|S|\).
	
	For a vector \(x\), define
	\begin{align*}
		\|x\|_1 &\coloneqq \sum_i |x_i|, \\
		\|x\|_2 &\coloneqq (\sum_i x_i^2)^{1/2}.
	\end{align*}
	For a matrix \(B=(B_{ij})\), define
	\begin{align*}
		\|B\|_\infty &\coloneqq \max_i \sum_j |B_{ij}|, \\
		\|B\|_2 &\coloneqq \sup_{\|x\|_2=1}\|Bx\|_2.
	\end{align*}
	
	For any \(S\subseteq[p]\), \(X_S\) denotes the submatrix of \(X\) consisting of the columns indexed by \(S\). For a vector \(v\in\mathbb{R}^p\), \(v_S\) denotes the corresponding subvector, and \(v_{-S}\) denotes the subvector indexed by the complement \(S^c=[p]\setminus S\).
	
	We adopt the standard conventions for zero dimensional vectors and matrices. In particular, zero dimensional subvectors and submatrices are well defined, and the \(0\times0\) matrix is treated as a valid square matrix whose inverse is itself. Under these conventions, the formulas below remain valid when the relevant index set is empty.
	
	We consider the linear model
	\begin{align*}
		y=X\beta_0+\varepsilon,
	\end{align*}
	where
	\begin{align*}
		y\in\mathbb{R}^n,\qquad X\in\mathbb{R}^{n\times p},\qquad \beta_0\in\mathbb{R}^p,\qquad \varepsilon\in\mathbb{R}^n.
	\end{align*}
	Throughout the paper, the design matrix \(X\) is treated as fixed. Whenever probabilities or expectations are invoked, the Lasso tuning parameter is deterministic and randomness is taken with respect to the noise \(\varepsilon\). Unless stated otherwise, no distributional assumption is imposed on \(\varepsilon\).
	
	\begin{definition}[Lasso estimator and associated quantities]
		\label{def:lasso}
		Let \(\lambda_{\mathrm L}>0\). The Lasso estimator is defined by
		\begin{align*}
			\hat{\beta}_{\mathrm L}
			\in
			\underset{\beta\in\mathbb{R}^p}{\arg\min}
			\left\{
			\frac{1}{2n}\|y-X\beta\|_2^2
			+
			\lambda_{\mathrm L}\|\beta\|_1
			\right\}.
		\end{align*}
		Associated with \(\hat{\beta}_{\mathrm L}\), define the \emph{equicorrelation set}
		\begin{align*}
			E
			\coloneqq
			\left\{
			j\in[p]:
			\left|
			\frac{1}{n}X_j^\top
			(y-X\hat{\beta}_{\mathrm L})
			\right|
			=
			\lambda_{\mathrm L}
			\right\},
		\end{align*}
		the \emph{restricted Gram matrix}
		\begin{align*}
			\Sigma_{n,E}
			\coloneqq
			\frac{1}{n}X_E^\top X_E,
		\end{align*}
		and the \emph{equicorrelation sign vector} \(s\in\{-1,1\}^{|E|}\),
		\begin{align*}
			s
			\coloneqq
			\operatorname{sign}
			(
			X_E^\top
			(y-X\hat{\beta}_{\mathrm L})
			).
		\end{align*}
	\end{definition}
	
	\begin{proposition}[Lasso KKT conditions]
		\label{prop:lasso-kkt}
		The Karush--Kuhn--Tucker conditions for the Lasso give
		\begin{align*}
			\frac{1}{n}X^\top
			(y-X\hat{\beta}_{\mathrm L})
			=
			\lambda_{\mathrm L}\bar z,
		\end{align*}
		where \(\bar z\in\partial\|\hat{\beta}_{\mathrm L}\|_1\), with
		\begin{align*}
			\bar z_j
			=
			\begin{cases}
				\operatorname{sign}((\hat{\beta}_{\mathrm L})_j),
				& \text{if }(\hat{\beta}_{\mathrm L})_j\neq0, \\[4pt]
				u_j\in[-1,1],
				& \text{if }(\hat{\beta}_{\mathrm L})_j=0.
			\end{cases}
		\end{align*}
		In particular, restricting the KKT conditions to the equicorrelation set gives
		\begin{align*}
			\frac{1}{n}X_E^\top
			(y-X\hat{\beta}_{\mathrm L})
			=
			\lambda_{\mathrm L}s.
		\end{align*}
		Moreover,
		\begin{align*}
			\hat{\beta}_{\mathrm L,-E}=0.
		\end{align*}
	\end{proposition}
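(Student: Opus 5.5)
The plan is to derive everything from the first-order optimality condition of the convex Lasso objective. The function
\[
f(\beta)=\frac{1}{2n}\|y-X\beta\|_2^2+\lambda_{\mathrm L}\|\beta\|_1
\]
is the sum of a differentiable convex function and a finite convex function. Hence \(\hat\beta_{\mathrm L}\) is a minimizer if and only if \(0\in\partial f(\hat\beta_{\mathrm L})\). By the subdifferential sum rule, which applies because both terms are finite everywhere,
\[
\partial f(\hat\beta_{\mathrm L})=-\tfrac1n X^\top(y-X\hat\beta_{\mathrm L})+\lambda_{\mathrm L}\,\partial\|\hat\beta_{\mathrm L}\|_1 .
\]
So there is some \(\bar z\in\partial\|\hat\beta_{\mathrm L}\|_1\) with \(\tfrac1n X^\top(y-X\hat\beta_{\mathrm L})=\lambda_{\mathrm L}\bar z\). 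The subdifferential of the \(\ell_1\) norm is the product of the coordinatewise subdifferentials of \(|\cdot|\). This gives the stated form: \(\bar z_j=\operatorname{sign}((\hat\beta_{\mathrm L})_j)\) when \((\hat\beta_{\mathrm L})_j\neq0\), and \(\bar z_j=u_j\in[-1,1]\) otherwise.

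Next I restrict to \(E\). By the displayed KKT identity, \(j\in E\) holds exactly when \(|\bar z_j|=1\), since \(\lambda_{\mathrm L}>0\). For \(j\in E\), the \(j\)-th coordinate of \(X^\top(y-X\hat\beta_{\mathrm L})\) equals \(n\lambda_{\mathrm L}\bar z_j\) with \(\bar z_j\in\{\pm1\}\). Its sign is therefore \(\bar z_j\), because \(n\lambda_{\mathrm L}>0\). This gives \(s=\bar z_E\) and hence \(\tfrac1n X_E^\top(y-X\hat\beta_{\mathrm L})=\lambda_{\mathrm L}s\). If \(E\) is empty, this identity is the trivial zero-dimensional statement under the stated conventions.

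Finally, to show \(\hat\beta_{\mathrm L,-E}=0\), take \(j\notin E\). The definition of \(E\) together with the KKT identity gives \(|\bar z_j|\neq 1\). If \((\hat\beta_{\mathrm L})_j\neq0\), we would have \(|\bar z_j|=1\), which is a contradiction. So \((\hat\beta_{\mathrm L})_j=0\).

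The only step that is not bookkeeping is justifying the optimality condition and the sum rule. Both follow from standard convex analysis because every term is finite and convex on \(\mathbb R^p\). Nothing about uniqueness of the Lasso is needed, since the argument applies to any minimizer.
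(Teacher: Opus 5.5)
Your proof is correct and complete. The paper gives no proof and presents the proposition as the standard consequence of the Lasso optimality condition, and your argument is that standard derivation: \(0\in\partial f(\hat\beta_{\mathrm L})\) via the sum rule, separability of \(\partial\|\cdot\|_1\), \(E\) identified as exactly the set where \(|\bar z_j|=1\) (so \(s=\bar z_E\)), and \((\hat\beta_{\mathrm L})_j\neq0\) forcing \(|\bar z_j|=1\), which gives \(\hat\beta_{\mathrm L,-E}=0\) for any minimizer.
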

	
	We impose the following standard assumption.
	
	\begin{assumption}[General position]
		\label{ass:general-position}
		The columns of \(X\) are in general position. More precisely, for every integer \(k<\min\{n,p\}\), no \(k\)-dimensional affine subspace of \(\mathbb{R}^n\) contains more than \(k+1\) elements of the signed set
		\begin{align*}
			\{\pm X_1,\ldots,\pm X_p\},
		\end{align*}
		excluding antipodal pairs.
	\end{assumption}
	
	The condition of general position is satisfied with probability one for random design models with an absolutely continuous joint distribution. Under Assumption~\ref{ass:general-position}, the Lasso solution is unique. Moreover, for fixed \(\lambda_{\mathrm L}>0\), its support coincides with the equicorrelation set \(E\) for almost every \(y\in\mathbb{R}^n\).
	
	Assumption~\ref{ass:general-position} is imposed throughout unless stated otherwise.
	
	\begin{proposition}[Lasso closed form representation]
		\label{prop:lasso-closed-form}
		Under Assumption~\ref{ass:general-position}, the Lasso solution is unique and
		\begin{align*}
			\operatorname{null}(X_E)=\{0\};
		\end{align*}
		see \citet{tibshirani2013lasso}. Consequently, \(X_E\) has full column rank and, whenever \(|E|>0\), \(\Sigma_{n,E}\) is positive definite. Moreover,
		\begin{align*}
			\hat{\beta}_{\mathrm L,E}
			&=
			(X_E^\top X_E)^{-1}
			(
			X_E^\top y
			-
			n\lambda_{\mathrm L}s
			) \\
			&=
			\Sigma_{n,E}^{-1}
			\left(
			\frac{1}{n}X_E^\top y
			-
			\lambda_{\mathrm L}s
			\right),
			\qquad
			\hat{\beta}_{\mathrm L,-E}=0.
		\end{align*}
	\end{proposition}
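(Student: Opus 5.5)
The plan is to take uniqueness of the Lasso solution and the identity \(\operatorname{null}(X_E)=\{0\}\) from \citet{tibshirani2013lasso} under Assumption~\ref{ass:general-position}, and derive the rest directly. Uniqueness rests on two facts. First, the fitted value \(X\hat\beta_{\mathrm L}\) is unique by strict convexity of the squared loss in \(X\beta\). Second, the residual \(y-X\hat\beta_{\mathrm L}\), and hence the equicorrelation set \(E\) and the sign vector \(s\), do not depend on which minimizer is chosen.

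For the null space statement, I would argue by contradiction. Suppose \(X_E\) had a nontrivial null vector. Then some column \(X_j\), \(j\in E\), is a linear combination of the others in \(E\). Multiplying columns by their signs \(s_i\), the signed columns \(s_iX_i\), \(i\in E\), all satisfy \(\frac1n (s_iX_i)^\top r=\lambda_{\mathrm L}\) with \(r=y-X\hat\beta_{\mathrm L}\). So they lie in the affine hyperplane \(\{x:x^\top r=n\lambda_{\mathrm L}\}\). Combined with the linear dependence, this places \(k+2\) signed columns in a \(k\)-dimensional affine subspace, contradicting Assumption~\ref{ass:general-position}. This affine/linear conversion, including the case \(|E|\ge n\), is the delicate step, and it is the one I would cite from Tibshirani. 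Full column rank of \(X_E\) then follows immediately. For \(v\neq0\) we have \(v^\top\Sigma_{n,E}v=\frac1n\|X_Ev\|_2^2>0\), so \(\Sigma_{n,E}\) is positive definite when \(|E|>0\).

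For the closed form, I would use Proposition~\ref{prop:lasso-kkt}, which gives \(\hat\beta_{\mathrm L,-E}=0\). This holds because any coordinate with \(\hat\beta_j\neq0\) has \(|\bar z_j|=1\), so \(j\in E\). Hence \(X\hat\beta_{\mathrm L}=X_E\hat\beta_{\mathrm L,E}\). Substituting into the restricted KKT identity \(\frac1n X_E^\top(y-X\hat\beta_{\mathrm L})=\lambda_{\mathrm L}s\) yields
\begin{align*}
\Sigma_{n,E}\hat\beta_{\mathrm L,E}=\frac1n X_E^\top y-\lambda_{\mathrm L}s .
\end{align*}
Inverting the positive definite \(\Sigma_{n,E}\) gives the second formula, and multiplying through by \(n\) gives the first. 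When \(E=\emptyset\), the zero-dimensional conventions make both formulas trivially valid.

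The main obstacle is the geometric general-position argument for \(\operatorname{null}(X_E)=\{0\}\). Since the statement explicitly cites \citet{tibshirani2013lasso}, I would invoke it rather than reprove it.
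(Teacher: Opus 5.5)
Your proposal is correct and follows the paper's route. The paper gives no separate proof: it takes uniqueness and \(\operatorname{null}(X_E)=\{0\}\) from \citet{tibshirani2013lasso}, then gets positive definiteness of \(\Sigma_{n,E}\) and the closed form by substituting \(\hat\beta_{\mathrm L,-E}=0\) into the KKT identity restricted to \(E\) (Proposition~\ref{prop:lasso-kkt}). Your sketch of the general-position contradiction, with the signed columns of \(E\) forced onto an affine hyperplane, is Tibshirani's own argument, so citing it is appropriate; and since \(E\) and \(s\) do not depend on which minimizer is chosen, your closed form also independently re-proves uniqueness.
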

	
	\section{The Generalized Lasso--Ridge Estimator}
	\label{sec:generalized-lasso-ridge}
	
	Let \(q\) be a nonnegative integer and \(A\in\mathbb{R}^{q\times|E|}\). In the optimization problem below, \(A\) is treated as fixed with respect to the correction vector, although it may be deterministic or constructed from the observed data. For the expectation analysis in Section~\ref{sec:refined-localization-lasso-correction}, we specialize this general construction to a deterministic family of penalty matrices indexed by the candidate support, so that the realized matrix is random only through the selected equicorrelation set.
	
	\begin{definition}[Generalized Lasso--Ridge estimator]
		\label{def:generalized-lasso-ridge}
		The generalized Lasso--Ridge estimator associated with \(A\) is defined by
		\begin{align*}
			\hat{\beta}_A =\hat{\beta}_{\mathrm L}+\hat{\delta}_A,
		\end{align*}
		where
		\begin{align*}
			\hat{\delta}_A \in\underset{ \substack{ \delta\in\mathbb{R}^p \\
					\delta_{-E}=0 } }{\arg\min} \left\{\frac{1}{2n}\|y-X\hat{\beta}_{\mathrm L}-X\delta \|_2^2+\frac{1}{2}\|A\delta_E\|_2^2\right\}.
		\end{align*}
	\end{definition}
	
	Since \(\hat{\beta}_{\mathrm L,-E}=0\), Definition~\ref{def:generalized-lasso-ridge} can equivalently be written as
	\begin{align*}
		\hat{\beta}_A \in\underset{ \substack{ \beta\in\mathbb{R}^p \\
				\beta_{-E}=0 } }{\arg\min} \left\{\frac{1}{2n}\|y-X\beta\|_2^2+\frac{1}{2} \|A(\beta_E-\hat{\beta}_{\mathrm L,E}) \|_2^2\right\}.
	\end{align*}
	
	The quadratic penalty can be written as
	\begin{align*}
		\frac{1}{2}\|A\delta_E\|_2^2 =\frac{1}{2}\delta_E^\top A^\top A\delta_E.
	\end{align*}
	Thus, the estimator depends on \(A\) only through the positive semidefinite matrix \(A^\top A\).
	
	\begin{proposition}[Closed form representation]
		\label{prop:generalized-closed-form}
		The coordinates of the generalized correction indexed by \(E\) satisfy
		\begin{align*}
			(\Sigma_{n,E}+A^\top A)\hat{\delta}_{A,E}
			=
			\lambda_{\mathrm L}s.
		\end{align*}
		Consequently,
		\begin{align*}
			\hat{\delta}_{A,E}
			&=
			(\Sigma_{n,E}+A^\top A)^{-1}\lambda_{\mathrm L}s,
			\qquad
			\hat{\delta}_{A,-E}=0,
		\end{align*}
		and
		\begin{align*}
			\hat{\beta}_{A,E}
			&=
			\hat{\beta}_{\mathrm L,E}
			+
			(\Sigma_{n,E}+A^\top A)^{-1}\lambda_{\mathrm L}s,
			\qquad
			\hat{\beta}_{A,-E}=0.
		\end{align*}
	\end{proposition}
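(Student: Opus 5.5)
The plan is to reduce the constrained problem in Definition~\ref{def:generalized-lasso-ridge} to an unconstrained strictly convex quadratic in \(\delta_E\in\mathbb{R}^{|E|}\), write its first-order condition, and then use the Lasso KKT identity of Proposition~\ref{prop:lasso-kkt} to simplify the right-hand side.

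Since \(\delta_{-E}=0\), we have \(X\delta=X_E\delta_E\). The objective therefore becomes
\begin{align*}
	g(\delta_E)=\frac{1}{2n}\|r-X_E\delta_E\|_2^2+\frac12\delta_E^\top A^\top A\delta_E,
	\qquad r\coloneqq y-X\hat{\beta}_{\mathrm L}.
\end{align*}
This is a convex quadratic with Hessian \(\Sigma_{n,E}+A^\top A\). Setting the gradient to zero gives the normal equation
\begin{align*}
	(\Sigma_{n,E}+A^\top A)\delta_E=\frac1n X_E^\top r .
\end{align*}
Because \(g\) is convex and differentiable, the minimizers are exactly the solutions of this equation.

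Proposition~\ref{prop:lasso-kkt}, restricted to \(E\), gives \(\frac1n X_E^\top r=\lambda_{\mathrm L}s\). This is the first claimed identity.

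For invertibility, I invoke Proposition~\ref{prop:lasso-closed-form}. Under Assumption~\ref{ass:general-position}, \(\Sigma_{n,E}\) is positive definite whenever \(|E|>0\). Since \(A^\top A\) is positive semidefinite, the sum \(\Sigma_{n,E}+A^\top A\) is positive definite. Hence the minimizer is unique and equals \((\Sigma_{n,E}+A^\top A)^{-1}\lambda_{\mathrm L}s\), with \(\hat{\delta}_{A,-E}=0\) imposed by the constraint.

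When \(E=\emptyset\), the paper's zero-dimensional conventions make all statements trivially valid. The formula for \(\hat{\beta}_A\) then follows by adding \(\hat{\beta}_{\mathrm L}\) and using \(\hat{\beta}_{\mathrm L,-E}=0\).

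The only delicate point is well-posedness: the \(\arg\min\) must be a singleton and the inverse must exist. Both follow from positive definiteness of \(\Sigma_{n,E}\), which is where general position is needed. If \(A\) is data dependent, I treat it as fixed conditional on the data, as the paper stipulates, so the argument is unaffected.
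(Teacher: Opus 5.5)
Your proposal is correct and follows the paper's proof essentially step for step. You reduce to a quadratic in \(\delta_E\), set the gradient to zero, substitute the restricted KKT identity \(\frac1n X_E^\top(y-X\hat{\beta}_{\mathrm L})=\lambda_{\mathrm L}s\), and use general position to get positive definiteness of \(\Sigma_{n,E}+A^\top A\); your explicit note that convexity makes the normal equation both necessary and sufficient for a minimizer is a small but welcome tightening of the paper's argument.
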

	
	\begin{proof}
		Because \(\delta_{-E}=0\), the objective defining
		\(\hat{\delta}_A\) reduces to
		\begin{align*}
			\frac{1}{2n}
			\|y-X\hat{\beta}_{\mathrm L}-X_E\delta_E\|_2^2
			+
			\frac{1}{2}\|A\delta_E\|_2^2.
		\end{align*}
		Differentiating with respect to \(\delta_E\) gives
		\begin{align*}
			-\frac{1}{n}X_E^\top
			(
			y-X\hat{\beta}_{\mathrm L}-X_E\delta_E
			)
			+
			A^\top A\delta_E.
		\end{align*}
		Setting the gradient equal to zero at
		\(\hat{\delta}_{A,E}\) gives
		\begin{align*}
			(
			\frac{1}{n}X_E^\top X_E+A^\top A
			)
			\hat{\delta}_{A,E}
			=
			\frac{1}{n}X_E^\top
			(y-X\hat{\beta}_{\mathrm L}).
		\end{align*}
		Using
		\begin{align*}
			\Sigma_{n,E}
			=
			\frac{1}{n}X_E^\top X_E
		\end{align*}
		and the Lasso KKT identity
		\begin{align*}
			\frac{1}{n}X_E^\top
			(y-X\hat{\beta}_{\mathrm L})
			=
			\lambda_{\mathrm L}s,
		\end{align*}
		we obtain
		\begin{align*}
			(\Sigma_{n,E}+A^\top A)\hat{\delta}_{A,E}
			=
			\lambda_{\mathrm L}s.
		\end{align*}
		By Assumption~\ref{ass:general-position},
		\(\Sigma_{n,E}\) is positive definite whenever \(|E|>0\),
		while \(A^\top A\) is positive semidefinite. Hence,
		\(\Sigma_{n,E}+A^\top A\) is positive definite and therefore
		invertible. It follows that
		\begin{align*}
			\hat{\delta}_{A,E}
			=
			(\Sigma_{n,E}+A^\top A)^{-1}\lambda_{\mathrm L}s,
			\qquad
			\hat{\delta}_{A,-E}=0.
		\end{align*}
		Finally, since
		\(\hat{\beta}_A=\hat{\beta}_{\mathrm L}+\hat{\delta}_A\),
		we have
		\begin{align*}
			\hat{\beta}_{A,E}
			&=
			\hat{\beta}_{\mathrm L,E}
			+
			(\Sigma_{n,E}+A^\top A)^{-1}\lambda_{\mathrm L}s,
			\qquad
			\hat{\beta}_{A,-E}=0.
		\end{align*}
		The case \(E=\varnothing\) is covered by the zero dimensional
		conventions introduced above.
	\end{proof}
	
	\section{Basic Properties and Special Cases}
	\label{sec:basic-properties}
	
	\begin{proposition}[Existence and uniqueness]
		\label{prop:generalized-existence-uniqueness}
		Under Assumption~\ref{ass:general-position}, the generalized Lasso--Ridge estimator is uniquely defined for every
		\(A\in\mathbb{R}^{q\times |E|}\).
	\end{proposition}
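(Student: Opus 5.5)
The plan has two layers: uniqueness of the first-stage quantities, then uniqueness of the correction. Under Assumption~\ref{ass:general-position}, Proposition~\ref{prop:lasso-closed-form} gives a unique Lasso solution \(\hat{\beta}_{\mathrm L}\). Since the residual \(y-X\hat{\beta}_{\mathrm L}\) is therefore unique, the equicorrelation set \(E\), the sign vector \(s\) and the restricted Gram matrix \(\Sigma_{n,E}\) of Definition~\ref{def:lasso} are uniquely determined. In particular, the dimension \(|E|\) of the penalty matrix \(A\) is unambiguous. It then remains to show that, with these quantities fixed, the correction problem in Definition~\ref{def:generalized-lasso-ridge} has exactly one minimizer \(\hat{\delta}_A\). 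Then \(\hat{\beta}_A=\hat{\beta}_{\mathrm L}+\hat{\delta}_A\) is unique.

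For the second layer, I would parametrize the feasible set \(\{\delta:\delta_{-E}=0\}\) by \(\delta_E\in\mathbb{R}^{|E|}\). The objective becomes
\begin{align*}
	f(\delta_E)=\frac{1}{2n}\|r-X_E\delta_E\|_2^2+\frac12\|A\delta_E\|_2^2,
	\qquad r\coloneqq y-X\hat{\beta}_{\mathrm L}.
\end{align*}
This is a quadratic with Hessian \(\Sigma_{n,E}+A^\top A\). By Proposition~\ref{prop:lasso-closed-form}, \(X_E\) has full column rank, so \(\Sigma_{n,E}\) is positive definite whenever \(|E|>0\). Adding the positive semidefinite matrix \(A^\top A\) keeps the Hessian positive definite, so \(f\) is strictly convex and coercive. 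Hence \(f\) has a unique minimizer, characterized by the first-order condition. By the computation in Proposition~\ref{prop:generalized-closed-form}, that minimizer is \((\Sigma_{n,E}+A^\top A)^{-1}\lambda_{\mathrm L}s\).

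The case \(E=\varnothing\) is handled by the zero-dimensional conventions of Section~\ref{sec:setup}. The feasible set is then \(\{0\}\), so \(\hat{\delta}_A=0\) and \(\hat{\beta}_A=\hat{\beta}_{\mathrm L}\) trivially. The only real point needing care is the first layer: one must note that uniqueness of \(\hat{\beta}_{\mathrm L}\) makes \(E\) and \(s\) well defined, so the choice of Lasso solution introduces no ambiguity. No condition on \(A\) beyond its dimensions is needed, because positive definiteness comes entirely from \(\Sigma_{n,E}\). This is why the statement holds for every \(A\in\mathbb{R}^{q\times|E|}\), including \(q=0\), which is least squares refitting on \(E\).
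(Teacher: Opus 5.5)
Your proposal is correct and follows essentially the paper's argument: treat \(E=\varnothing\) separately (the feasible set is \(\{0\}\)), and for \(E\neq\varnothing\) use full column rank of \(X_E\) under general position together with \(A^\top A\succeq0\) to get \(\Sigma_{n,E}+A^\top A\succ0\), which yields the unique closed-form correction. You additionally spell out two points the paper leaves implicit: uniqueness of \(\hat{\beta}_{\mathrm L}\) makes \(E\) and \(s\) well defined, and strict convexity guarantees that the stationary point is the unique minimizer. These make the argument more explicit but do not change the route.
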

	
	\begin{proof}
		If \(E=\varnothing\), the constraint \(\delta_{-E}=0\) forces \(\delta=0\), so uniqueness is immediate. If \(E\neq\varnothing\), Assumption~\ref{ass:general-position} implies that \(X_E\) has full column rank, so \(\Sigma_{n,E}\) is positive definite. Since \(A^{\top}A\) is positive semidefinite, \(\Sigma_{n,E}+A^{\top}A\) is positive definite and therefore invertible. The closed form correction in Proposition~\ref{prop:generalized-closed-form} is consequently unique.
	\end{proof}
	
	\begin{remark}
		If \(A_1^{\top}A_1=A_2^{\top}A_2\), then \(A_1\) and \(A_2\) produce the same generalized Lasso--Ridge estimator. Hence, the effective quadratic penalty matrix is \(A^{\top}A\). We next consider several concrete choices of \(A\), which recover some existing procedures as special cases and also suggest new variants within the generalized framework.
	\end{remark}
	
	\subsection{The Original Lasso--Ridge Estimator}
	
	Let \(A=\sqrt{\lambda_{\mathrm R}}\,I_{|E|}\) with
	\(\lambda_{\mathrm R}\geq 0\). Then
	\(\|A\delta_E\|_2^2/2
	=
	\lambda_{\mathrm R}\|\delta_E\|_2^2/2\),
	so Definition~\ref{def:generalized-lasso-ridge} reduces to the
	Lasso--Ridge estimator. According to the closed form representation in Proposition~\ref{prop:generalized-closed-form},
	the corresponding correction is
	\begin{align*}
		\hat{\delta}_{A,E}
		=
		(
		\Sigma_{n,E}
		+
		\lambda_{\mathrm R}I_{|E|}
		)^{-1}
		\lambda_{\mathrm L}s.
	\end{align*}
	
	\subsection{Least Squares Refitting on the Lasso Equicorrelation Set}
	
	If \(A=0\), the penalty applied in the second stage vanishes. In this case,
	\begin{align*}
		\hat{\beta}_{\mathrm{LS}} \in\underset{\substack{\beta\in\mathbb{R}^p \\
				\beta_{-E}=0}}{\arg\min}
		\frac{1}{2n}\|y-X\beta\|_2^2,
	\end{align*}
	so \(\hat{\beta}_{\mathrm{LS}}\) is the least squares refitted estimator on the Lasso equicorrelation set, in the spirit of Lasso based least squares refitting; see, for example, \citet{belloni2013least}. Its active coordinates satisfy
	\begin{align*}
		\hat{\beta}_{\mathrm{LS},E}
		&= (X_E^{\top}X_E)^{-1}X_E^{\top}y \\
		&= \Sigma_{n,E}^{-1}
		\left(\frac{1}{n}X_E^\top y\right).
	\end{align*}
	Using the Lasso KKT identity from Proposition~\ref{prop:lasso-kkt},
	\begin{align*}
		\frac{1}{n}X_E^\top
		(y-X\hat{\beta}_{\mathrm L})
		=
		\lambda_{\mathrm L}s,
	\end{align*}
	we obtain
	\begin{align*}
		\hat{\beta}_{\mathrm{LS},E}
		=
		\hat{\beta}_{\mathrm L,E}
		+
		\Sigma_{n,E}^{-1}\lambda_{\mathrm L}s.
	\end{align*}
	This coincides with the closed form representation in
	Proposition~\ref{prop:generalized-closed-form} when \(A=0\).
	
	\subsection{Gram Proportional Penalties}
	
	A useful subclass is obtained by taking the quadratic penalty matrix proportional to the restricted Gram matrix. Let \(\tau\geq0\) and suppose that \(A^\top A=\tau\Sigma_{n,E}\). Proposition~\ref{prop:generalized-closed-form} then gives
	\begin{align*}
		\hat{\delta}_{A,E} &= ( \Sigma_{n,E} + \tau\Sigma_{n,E} )^{-1} \lambda_{\mathrm L}s \\
		&= \frac{1}{1+\tau} \Sigma_{n,E}^{-1} \lambda_{\mathrm L}s.
	\end{align*}
	Consequently,
	\begin{align*}
		\hat{\beta}_{A,E} = \hat{\beta}_{\mathrm L,E} + \frac{1}{1+\tau} \Sigma_{n,E}^{-1} \lambda_{\mathrm L}s.
	\end{align*}
	Equivalently, define \(\phi\coloneqq\tau/(1+\tau)\in[0,1)\), so that \(\tau=\phi/(1-\phi)\). Hence,
	\begin{align*}
		A^\top A &= \frac{\phi}{1-\phi} \Sigma_{n,E}.
	\end{align*}
	For this choice,
	\begin{align*}
		\hat{\delta}_{A,E} &= (1-\phi) \Sigma_{n,E}^{-1} \lambda_{\mathrm L}s, \qquad \hat{\delta}_{A,-E}=0,
	\end{align*}
	and therefore
	\begin{align*}
		\hat{\beta}_{A,E} = \hat{\beta}_{\mathrm L,E} + (1-\phi) \Sigma_{n,E}^{-1} \lambda_{\mathrm L}s.
	\end{align*}
	Since
	\begin{align*}
		\hat{\beta}_{\mathrm L,E} = (X_E^\top X_E)^{-1} ( X_E^\top y - n\lambda_{\mathrm L}s ),
	\end{align*}
	it follows that
	\begin{align*}
		\hat{\beta}_{A,E} &= (X_E^\top X_E)^{-1} ( X_E^\top y - n\lambda_{\mathrm L}s ) + n(1-\phi) (X_E^\top X_E)^{-1} \lambda_{\mathrm L}s \\
		&= (X_E^\top X_E)^{-1} ( X_E^\top y - n\phi\lambda_{\mathrm L}s ).
	\end{align*}
	Moreover, since
	\begin{align*}
		\hat{\beta}_{\mathrm{LS},E} = (X_E^\top X_E)^{-1} X_E^\top y,
	\end{align*}
	the same estimator admits the interpolation representation
	\begin{align*}
		\hat{\beta}_{A,E} = \phi\hat{\beta}_{\mathrm L,E} + (1-\phi)\hat{\beta}_{\mathrm{LS},E}.
	\end{align*}
	
	\subsection{Relation to the Relaxed Lasso}
	
	Following the relaxation idea of \citet{meinshausen2007relaxed}, let \(\phi\in[0,1]\) denote the relaxation parameter. We consider an equicorrelation-set version of the relaxed Lasso: after the first-stage Lasso determines the set \(E\), define
	\begin{align*}
		\hat{\beta}_{\mathrm{rel}}(\phi) \in\underset{ \substack{ \beta\in\mathbb{R}^p \\
				\beta_{-E}=0 } }{\arg\min} \left\{\frac{1}{2n}\|y-X\beta\|_2^2 + \phi\lambda_{\mathrm L} \|\beta_E\|_1\right\}.
	\end{align*}
	For \(\phi=0\), this is understood as an unpenalized refit by least squares on \(E\), whereas \(\phi=1\) recovers the original Lasso estimator. This differs slightly from the original relaxed Lasso of \citet{meinshausen2007relaxed}, which restricts the second-stage problem to the nonzero support selected by the first-stage Lasso; the two definitions coincide whenever \(\operatorname{supp}(\hat{\beta}_{\mathrm L})=E\).
	
	Now fix \(\phi\in[0,1)\), and suppose that the relaxed Lasso retains the support and sign vector obtained in the first stage, namely
	\begin{align*}
		\operatorname{supp}( \hat{\beta}_{\mathrm{rel}}(\phi) ) &= E, \\
		\operatorname{sign}( \hat{\beta}_{\mathrm{rel},E}(\phi) ) &= s.
	\end{align*}
	Under this condition, the KKT equations on \(E\) are
	\begin{align*}
		\frac{1}{n} X_E^\top ( y-X_E\hat{\beta}_{\mathrm{rel},E}(\phi) ) = \phi\lambda_{\mathrm L}s.
	\end{align*}
	Equivalently,
	\begin{align*}
		\Sigma_{n,E} \hat{\beta}_{\mathrm{rel},E}(\phi) = \frac{1}{n}X_E^\top y - \phi\lambda_{\mathrm L}s.
	\end{align*}
	Since \(\Sigma_{n,E}\) is invertible under Assumption~\ref{ass:general-position},
	\begin{align*}
		\hat{\beta}_{\mathrm{rel},E}(\phi) &= \Sigma_{n,E}^{-1} \left(\frac{1}{n}X_E^\top y - \phi\lambda_{\mathrm L}s\right) \\
		&= (X_E^\top X_E)^{-1} ( X_E^\top y - n\phi\lambda_{\mathrm L}s ).
	\end{align*}
	Using the identity for the Lasso estimator obtained in the first stage from Proposition~\ref{prop:lasso-closed-form},
	\begin{align*}
		\hat{\beta}_{\mathrm L,E} = \Sigma_{n,E}^{-1} \left(\frac{1}{n}X_E^\top y - \lambda_{\mathrm L}s\right),
	\end{align*}
	the relaxed Lasso solution can also be written as
	\begin{align*}
		\hat{\beta}_{\mathrm{rel},E}(\phi) = \hat{\beta}_{\mathrm L,E} + (1-\phi) \Sigma_{n,E}^{-1} \lambda_{\mathrm L}s.
	\end{align*}
	
	This representation coincides with a Gram proportional generalized Lasso--Ridge estimator. Specifically, for \(\phi\in[0,1)\), define
	\begin{align*}
		A_\phi^\top A_\phi = \frac{\phi}{1-\phi} \Sigma_{n,E}.
	\end{align*}
	For example, one may take \(q=|E|\) and
	\begin{align*}
		A_\phi = \sqrt{\frac{\phi}{1-\phi}}\, \Sigma_{n,E}^{1/2},
	\end{align*}
	where \(\Sigma_{n,E}^{1/2}\) denotes the symmetric positive definite square root of \(\Sigma_{n,E}\). Equivalently,
	\begin{align*}
		A_\phi^\top A_\phi = \frac{\phi}{n(1-\phi)} X_E^\top X_E.
	\end{align*}
	By Proposition~\ref{prop:generalized-closed-form},
	\begin{align*}
		\hat{\delta}_{A_\phi,E} &= ( \Sigma_{n,E} + A_\phi^\top A_\phi )^{-1} \lambda_{\mathrm L}s \\
		&= \left(\Sigma_{n,E} + \frac{\phi}{1-\phi} \Sigma_{n,E}\right)^{-1} \lambda_{\mathrm L}s \\
		&= (1-\phi) \Sigma_{n,E}^{-1} \lambda_{\mathrm L}s.
	\end{align*}
	Therefore,
	\begin{align*}
		\hat{\beta}_{A_\phi,E} &= \hat{\beta}_{\mathrm L,E} + (1-\phi) \Sigma_{n,E}^{-1} \lambda_{\mathrm L}s \\
		&= (X_E^\top X_E)^{-1} ( X_E^\top y - n\phi\lambda_{\mathrm L}s ).
	\end{align*}
	Consequently, whenever the relaxed Lasso retains the support \(E\) and sign vector \(s\) obtained in the first stage,
	\begin{align*}
		\hat{\beta}_{A_\phi} = \hat{\beta}_{\mathrm{rel}}(\phi), \qquad \phi\in[0,1).
	\end{align*}
	Thus, on any portion of the relaxed Lasso path for which the support and signs remain unchanged, the relaxed Lasso is represented exactly by a generalized Lasso--Ridge estimator whose quadratic penalty matrix is proportional to the restricted Gram matrix.
	
	The two endpoints have natural interpretations. At \(\phi=0\), we have \(A_0^\top A_0=0\), and both procedures reduce to refitting by least squares on the Lasso equicorrelation set \(E\). As \(\phi\uparrow1\), the factor \(\phi/(1-\phi)\) diverges while \(\hat{\delta}_{A_\phi,E}=(1-\phi)\Sigma_{n,E}^{-1}\lambda_{\mathrm L}s\to0\), so \(\hat{\beta}_{A_\phi,E}\to\hat{\beta}_{\mathrm L,E}\). Thus, \(\phi=1\), corresponding to the original Lasso, is obtained as the limit of the Gram proportional family as the penalty tends to infinity.
	
	\begin{remark}
		The preceding equivalence is conditional on stability of the support and sign pattern for the relaxed Lasso solution obtained in the second stage. In general, the relaxed Lasso may set one or more coordinates in \(E\) to zero, in which case its KKT subgradient need not equal the sign vector \(s\) obtained in the first stage. In contrast, the generalized Lasso--Ridge estimator uses the fixed vector \(s\) obtained in the first stage in its closed form correction. Therefore, the two estimators need not coincide once the support or sign pattern of the relaxed Lasso changes. The equivalence is exact on portions of the relaxed Lasso path for which
		\begin{align*}
			\operatorname{supp}(\hat{\beta}_{\mathrm{rel}}(\phi)) &= E,
			\qquad
			\operatorname{sign}(\hat{\beta}_{\mathrm{rel},E}(\phi)) = s.
		\end{align*}
	\end{remark}
	
	\subsection{Diagonal Penalties}
	
	If \(A=\operatorname{diag}(a_j:j\in E)\), then
	\begin{align*}
		\frac{1}{2}\|A\delta_E\|_2^2 = \frac{1}{2}\sum_{j\in E} a_j^2\delta_j^2.
	\end{align*}
	This form allows the correction vector to be penalized differently across coordinates.
	
	\section{Prediction Improvement: The Positive Gain Term}
	\label{sec:prediction-improvement}
	
	Recall the realized prediction improvement defined in the introduction. Following the same argument used for the Lasso--Ridge estimator in \citet{liu2026prediction}, we obtain the corresponding decomposition for the generalized correction. A full projection-based derivation is provided in Appendix~\ref{app:prediction-decomposition}. Using the linear model, the support restriction \(\hat{\delta}_{A,-E}=0\), and the Lasso KKT identity, we obtain
	\begin{align}
		\Delta(\hat{\beta}_{\mathrm L},\hat{\beta}_A)
		&=2\lambda_{\mathrm L}\langle \hat{\delta}_{A,E},s\rangle
		-\frac{1}{n}\langle X_E\hat{\delta}_{A,E},X_E\hat{\delta}_{A,E}\rangle
		-\frac{2}{n}\langle X_E\hat{\delta}_{A,E},\varepsilon\rangle.
		\label{eq:prediction-decomposition}
	\end{align}
	Under the zero dimensional conventions adopted above, this identity also holds when \(E=\varnothing\). We first control the first two terms. Define
	\begin{align*}
		H_{E,s}(A)
		\coloneqq
		2\lambda_{\mathrm L}\langle \hat{\delta}_{A,E},s\rangle
		-\frac{1}{n}\langle X_E\hat{\delta}_{A,E},\,X_E\hat{\delta}_{A,E}\rangle.
	\end{align*}
	Then
	\begin{align*}
		\Delta(\hat{\beta}_{\mathrm L},\hat{\beta}_A)
		=
		H_{E,s}(A)
		-\frac{2}{n}\langle X_E\hat{\delta}_{A,E},\varepsilon\rangle.
	\end{align*}
	
	\begin{proposition}[Positivity and lower bounds for \(H_{E,s}(A)\)]
		\label{prop:sufficient-penalty-conditions}
		Suppose that Assumption~\ref{ass:general-position} holds and \(E\neq\varnothing\). If
		\begin{align}
			A^\top A\succeq0,
			\label{eq:general-psd-condition}
		\end{align}
		then \(H_{E,s}(A)>0\). If, in addition, there exists \(\tau\geq0\) such that
		\begin{align}
			0\preceq A^\top A\preceq\tau\Sigma_{n,E},
			\label{eq:relative-penalty-condition}
		\end{align}
		then
		\begin{align}
			H_{E,s}(A)
			\geq
			\lambda_{\mathrm L}^2|E|
			\frac{1+2\tau}{(1+\tau)^2}
			\frac{1}{\|\Sigma_{n,E}\|_2}
			>0.
			\label{eq:general-positive-lower-bound}
		\end{align}
	\end{proposition}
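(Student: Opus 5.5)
Writing $\Sigma\coloneqq\Sigma_{n,E}$, $M\coloneqq\Sigma+A^\top A$, and $v\coloneqq\lambda_{\mathrm L}s$, I would first rewrite $H_{E,s}(A)$ as a quadratic form in $v$. By Proposition~\ref{prop:generalized-closed-form}, $\hat\delta_{A,E}=M^{-1}v$. Also $\frac1n\|X_E\hat\delta_{A,E}\|_2^2=\hat\delta_{A,E}^\top\Sigma\hat\delta_{A,E}$. Hence
\begin{align*}
H_{E,s}(A)=2v^\top M^{-1}v-v^\top M^{-1}\Sigma M^{-1}v
=v^\top M^{-1}\bigl(2M-\Sigma\bigr)M^{-1}v
=v^\top M^{-1}\bigl(\Sigma+2A^\top A\bigr)M^{-1}v .
\end{align*}
For the first claim, Assumption~\ref{ass:general-position} and $E\neq\varnothing$ give $\Sigma\succ0$ (Proposition~\ref{prop:lasso-closed-form}). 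Then $\Sigma+2A^\top A\succ0$. Since $v\neq0$ (every entry of $s$ is $\pm1$ and $\lambda_{\mathrm L}>0$) and $M$ is invertible, $M^{-1}v\neq0$, so $H_{E,s}(A)>0$.

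For the lower bound, I would pass to whitened coordinates. Set $B\coloneqq\Sigma^{-1/2}A^\top A\Sigma^{-1/2}$, so that \eqref{eq:relative-penalty-condition} becomes $0\preceq B\preceq\tau I$. Write $M=\Sigma^{1/2}(I+B)\Sigma^{1/2}$ and $w\coloneqq\Sigma^{-1/2}v$. Then
\begin{align*}
H_{E,s}(A)=w^\top (I+B)^{-1}(I+2B)(I+B)^{-1}w .
\end{align*}
The middle matrix is a function of the symmetric matrix $B$ alone, so its eigenvalues are $g(b)=(1+2b)/(1+b)^2$ over the eigenvalues $b\in[0,\tau]$ of $B$. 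Since $g(b)=1-b^2/(1+b)^2$, $g$ is decreasing on $[0,\infty)$, and therefore $g(b)\ge g(\tau)=(1+2\tau)/(1+\tau)^2$. This gives
\begin{align*}
H_{E,s}(A)\ge \frac{1+2\tau}{(1+\tau)^2}\,\|w\|_2^2 .
\end{align*}

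It remains to bound $\|w\|_2^2=v^\top\Sigma^{-1}v$ from below. Using $\Sigma^{-1}\succeq\|\Sigma\|_2^{-1}I$ and $\|v\|_2^2=\lambda_{\mathrm L}^2|E|$, we get $\|w\|_2^2\ge\lambda_{\mathrm L}^2|E|/\|\Sigma\|_2$, which yields \eqref{eq:general-positive-lower-bound}. Strict positivity of the bound is immediate.

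The main obstacle is the matrix-inequality step. Because $A^\top A$ and $\Sigma$ need not commute, one cannot simply substitute $A^\top A=\tau\Sigma$ as a worst case. The whitening by $\Sigma^{1/2}$ resolves this: the whole quadratic form becomes a spectral function of the single matrix $B$, and monotonicity of $g$ on $[0,\infty)$ then applies eigenvalue by eigenvalue.
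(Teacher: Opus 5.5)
Your proposal is correct and follows essentially the same route as the paper. Both arguments reduce \(H_{E,s}(A)\) to a quadratic form with middle matrix \((\Sigma_{n,E}+A^\top A)^{-1}(\Sigma_{n,E}+2A^\top A)(\Sigma_{n,E}+A^\top A)^{-1}\), whiten by \(\Sigma_{n,E}^{1/2}\) so that it becomes a spectral function of \(\Sigma_{n,E}^{-1/2}A^\top A\,\Sigma_{n,E}^{-1/2}\), apply monotonicity of \((1+2x)/(1+x)^2\), and finish with \(\Sigma_{n,E}^{-1}\succeq\|\Sigma_{n,E}\|_2^{-1}I\). The differences are cosmetic: you expand directly in \(\lambda_{\mathrm L}s\) and prove monotonicity via \(1-x^2/(1+x)^2\), whereas the paper goes through \(\hat{\delta}_{A,E}\) and uses the derivative.
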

	
	\begin{proof}
		By the closed form representation in
		Proposition~\ref{prop:generalized-closed-form},
		\begin{align*}
			(\Sigma_{n,E}+A^\top A)\hat{\delta}_{A,E} =\lambda_{\mathrm L}s.
		\end{align*}
		Multiplying from the left by \(\hat{\delta}_{A,E}^\top\) yields
		\begin{align*}
			\lambda_{\mathrm L}\langle\hat{\delta}_{A,E},s\rangle =\hat{\delta}_{A,E}^\top (\Sigma_{n,E}+A^\top A)\hat{\delta}_{A,E}.
		\end{align*}
		Moreover,
		\begin{align*}
			\frac{1}{n}\langle X_E\hat{\delta}_{A,E},X_E\hat{\delta}_{A,E}\rangle =\hat{\delta}_{A,E}^\top\Sigma_{n,E}\hat{\delta}_{A,E}.
		\end{align*}
		Therefore, substituting these identities into the definition of \(H_{E,s}(A)\) gives
		\begin{align}
			H_{E,s}(A) =\hat{\delta}_{A,E}^\top (\Sigma_{n,E}+2A^\top A)\hat{\delta}_{A,E}. \label{eq:H-general-matrix}
		\end{align}
		Under Assumption~\ref{ass:general-position}, \(\Sigma_{n,E}\succ0\). Hence, if \(A^\top A\succeq0\), then
		\begin{align*}
			\Sigma_{n,E}+2A^\top A\succ0.
		\end{align*}
		Since \(\lambda_{\mathrm L}>0\), \(s\neq0\), and \(\Sigma_{n,E}+A^\top A\) is invertible,
		\begin{align*}
			\hat{\delta}_{A,E} =(\Sigma_{n,E}+A^\top A)^{-1}\lambda_{\mathrm L}s \neq0.
		\end{align*}
		It follows from~\eqref{eq:H-general-matrix} that
		\begin{align*}
			H_{E,s}(A)>0.
		\end{align*}
		
		To derive the quantitative lower bound, define
		\begin{align*}
			R_E\coloneqq \Sigma_{n,E}^{-1/2}A^\top A\Sigma_{n,E}^{-1/2}.
		\end{align*}
		Condition~\eqref{eq:relative-penalty-condition} implies, by congruence with
		\(\Sigma_{n,E}^{-1/2}\),
		\begin{align*}
			0 \preceq R_E
			= \Sigma_{n,E}^{-1/2}A^\top A\Sigma_{n,E}^{-1/2}
			\preceq \tau I_{|E|}.
		\end{align*}
		Furthermore,
		\begin{align*}
			\Sigma_{n,E}+A^\top A
			&=\Sigma_{n,E}^{1/2}(I_{|E|}+R_E)\Sigma_{n,E}^{1/2}, \\
			\Sigma_{n,E}+2A^\top A
			&=\Sigma_{n,E}^{1/2}(I_{|E|}+2R_E)\Sigma_{n,E}^{1/2}.
		\end{align*}
		First, substituting the closed form representation from
		Proposition~\ref{prop:generalized-closed-form}
		into~\eqref{eq:H-general-matrix} gives
		\begin{align*}
			H_{E,s}(A)
			&=
			\lambda_{\mathrm L}^2
			s^\top
			(\Sigma_{n,E}+A^\top A)^{-1}
			(\Sigma_{n,E}+2A^\top A)
			(\Sigma_{n,E}+A^\top A)^{-1}
			s.
		\end{align*}
		Using the two identities above, we obtain
		\begin{align*}
			H_{E,s}(A)
			&=
			\lambda_{\mathrm L}^2
			s^\top
			\Sigma_{n,E}^{-1/2}
			(I_{|E|}+R_E)^{-1}
			(I_{|E|}+2R_E)
			(I_{|E|}+R_E)^{-1}
			\Sigma_{n,E}^{-1/2}s.
		\end{align*}
		By the spectral theorem and the bound \(0\preceq R_E\preceq \tau I_{|E|}\),
		\begin{align*}
			R_E
			=
			Q\operatorname{diag}(\nu_1,\ldots,\nu_{|E|})Q^\top,
			\qquad
			0\leq\nu_i\leq\tau.
		\end{align*}
		Then
		\begin{align*}
			& (I_{|E|}+R_E)^{-1}(I_{|E|}+2R_E)(I_{|E|}+R_E)^{-1} \\
			& \qquad= Q\operatorname{diag}\left(\frac{1+2\nu_1}{(1+\nu_1)^2}, \ldots, \frac{1+2\nu_{|E|}}{(1+\nu_{|E|})^2}\right)Q^\top.
		\end{align*}
		The function \(f(x)\coloneqq(1+2x)/(1+x)^2\) is decreasing on \([0,\infty)\), since
		\begin{align*}
			f'(x)=-\frac{2x}{(1+x)^3}\leq0.
		\end{align*}
		Hence,
		\begin{align*}
			\frac{1+2\nu_i}{(1+\nu_i)^2} \geq \frac{1+2\tau}{(1+\tau)^2}, \qquad i=1,\ldots,|E|.
		\end{align*}
		Therefore,
		\begin{align*}
			H_{E,s}(A) \geq \lambda_{\mathrm L}^2 \frac{1+2\tau}{(1+\tau)^2} s^\top\Sigma_{n,E}^{-1}s.
		\end{align*}
		Since
		\begin{align*}
			\Sigma_{n,E}^{-1} \succeq \frac{1}{\|\Sigma_{n,E}\|_2}I_{|E|}
		\end{align*}
		and \(\| s\|_2^2=|E|\),
		\begin{align*}
			s^\top\Sigma_{n,E}^{-1}s \geq \frac{|E|}{\|\Sigma_{n,E}\|_2}.
		\end{align*}
		Consequently,
		\begin{align*}
			H_{E,s}(A) \geq \lambda_{\mathrm L}^2|E| \frac{1+2\tau}{(1+\tau)^2} \frac{1}{\|\Sigma_{n,E}\|_2} >0.
		\end{align*}
	\end{proof}
	
	\begin{remark}
		Since \(A^\top A\succeq0\) for every real matrix \(A\), condition~\eqref{eq:general-psd-condition} is automatically satisfied. Thus, under Assumption~\ref{ass:general-position} and \(E\neq\varnothing\), every generalized Lasso--Ridge estimator with quadratic penalty
		\begin{align*}
			\frac{1}{2}\|A\delta_E\|_2^2
		\end{align*}
		satisfies
		\begin{align*}
			H_{E,s}(A)>0.
		\end{align*}
		Condition~\eqref{eq:relative-penalty-condition} is stronger and provides an explicit positive lower bound.
	\end{remark}
	
	\subsection{Some Concrete Examples}
	
	Throughout this subsection, suppose \(E\neq\varnothing\).
	
	\paragraph{Least squares refitting.}
	For least squares refitting, \(A^\top A=0\). Thus, condition~\eqref{eq:relative-penalty-condition} holds with \(\tau=0\), and
	\begin{align*}
		H_{E,s}(0) \geq \frac{\lambda_{\mathrm L}^2|E|}{\|\Sigma_{n,E}\|_2} >0.
	\end{align*}
	
	\paragraph{Lasso--Ridge penalty.}
	For the Lasso--Ridge estimator, \(A^\top A=\lambda_{\mathrm R}I_{|E|}\) with \(\lambda_{\mathrm R}\geq0\). Since \(A^\top A\succeq0\), we have \(H_{E,s}(A)>0\). Let \(\mu_{\min,E}\coloneqq\lambda_{\min}(\Sigma_{n,E})>0\). Since \(\Sigma_{n,E}\succeq\mu_{\min,E}I_{|E|}\), we have
	\begin{align*}
		A^\top A =\lambda_{\mathrm R}I_{|E|} \preceq \frac{\lambda_{\mathrm R}}{\mu_{\min,E}}\Sigma_{n,E}.
	\end{align*}
	Thus, condition~\eqref{eq:relative-penalty-condition} holds with \(\tau_{\mathrm R}\coloneqq\lambda_{\mathrm R}/\mu_{\min,E}\), and
	\begin{align*}
		H_{E,s}(A) \geq \lambda_{\mathrm L}^2|E| \frac{1+2\tau_{\mathrm R}}{(1+\tau_{\mathrm R})^2} \frac{1}{\|\Sigma_{n,E}\|_2} >0.
	\end{align*}
	
	\paragraph{Gram proportional penalty.}
	Suppose that \(A^\top A=\tau\Sigma_{n,E}\) with \(\tau\geq0\). Then \(0\preceq A^\top A\preceq\tau\Sigma_{n,E}\), so Proposition~\ref{prop:sufficient-penalty-conditions} gives
	\begin{align*}
		H_{E,s}(A) \geq \lambda_{\mathrm L}^2|E| \frac{1+2\tau}{(1+\tau)^2} \frac{1}{\|\Sigma_{n,E}\|_2} >0.
	\end{align*}
	In this case,
	\begin{align*}
		H_{E,s}(A) = \lambda_{\mathrm L}^2 \frac{1+2\tau}{(1+\tau)^2} s^\top\Sigma_{n,E}^{-1}s.
	\end{align*}
	
	\paragraph{Diagonal penalty.}
	Suppose that \(A^\top A=D_E\coloneqq\operatorname{diag}(d_j:j\in E)\) with \(d_j\geq0\). Then \(A^\top A=D_E\succeq0\), so \(H_{E,s}(A)>0\). Define \(d_{\max}\coloneqq\max_{j\in E}d_j\). Since
	\begin{align*}
		A^\top A=D_E \preceq d_{\max}I_{|E|} \preceq \frac{d_{\max}}{\mu_{\min,E}}\Sigma_{n,E},
	\end{align*}
	condition~\eqref{eq:relative-penalty-condition} holds with \(\tau_D\coloneqq d_{\max}/\mu_{\min,E}\). Therefore,
	\begin{align*}
		H_{E,s}(A) \geq \lambda_{\mathrm L}^2|E| \frac{1+2\tau_D}{(1+\tau_D)^2} \frac{1}{\|\Sigma_{n,E}\|_2} >0.
	\end{align*}
	
	\paragraph{Combined ridge and Gram penalty.}
	Suppose that \(A^\top A=\lambda_{\mathrm R}I_{|E|}+\tau\Sigma_{n,E}\) with \(\lambda_{\mathrm R}\geq0\) and \(\tau\geq0\). Then \(A^\top A\succeq0\), so \(H_{E,s}(A)>0\). Moreover,
	\begin{align*}
		A^\top A \preceq \left(\tau+\frac{\lambda_{\mathrm R}}{\mu_{\min,E}}\right)\Sigma_{n,E}.
	\end{align*}
	Defining \(\tau_*\coloneqq\tau+\lambda_{\mathrm R}/\mu_{\min,E}\), we obtain
	\begin{align*}
		H_{E,s}(A) \geq \lambda_{\mathrm L}^2|E| \frac{1+2\tau_*}{(1+\tau_*)^2} \frac{1}{\|\Sigma_{n,E}\|_2} >0.
	\end{align*}
	
	\paragraph{Low rank penalty.}
	Suppose that \(A^\top A=UU^\top\) for some matrix \(U\). Since \(UU^\top\succeq0\), we have \(H_{E,s}(A)>0\). Define
	\begin{align*}
		\tau_U\coloneqq
		\|\Sigma_{n,E}^{-1/2}UU^\top\Sigma_{n,E}^{-1/2}\|_2.
	\end{align*}
	By the definition of \(\tau_U\),
	\begin{align*}
		\Sigma_{n,E}^{-1/2}UU^\top\Sigma_{n,E}^{-1/2}
		\preceq \tau_U I,
	\end{align*}
	and hence
	\begin{align*}
		A^\top A=UU^\top\preceq\tau_U\Sigma_{n,E}.
	\end{align*}
	Therefore,
	\begin{align*}
		H_{E,s}(A)
		\geq \lambda_{\mathrm L}^2|E|
		\frac{1+2\tau_U}{(1+\tau_U)^2}
		\frac{1}{\|\Sigma_{n,E}\|_2}
		>0.
	\end{align*}
	
	\section{Localization Control of the Noise Term}
	\label{sec:refined-localization-lasso-correction}
	
	The previous section shows that the generalized correction creates a positive gain contribution to prediction improvement. To obtain an expected prediction comparison, it remains to control
	\begin{align*}
		\mathbb E\left[
		\frac{2}{n}\langle X_E\hat{\delta}_{A,E},\varepsilon\rangle
		\right],
	\end{align*}
	where both the equicorrelation set \(E\) and its sign vector \(s\) are data dependent. In this section, \(X\) and \(\lambda_{\mathrm L}\) remain deterministic as in Section~\ref{sec:setup}, and all probabilities and expectations are taken with respect to \(\varepsilon\). Under Assumption~\ref{ass:general-position}, every realized equicorrelation set satisfies
	\begin{align*}
		\operatorname{rank}(X_E) &=|E|, \\
		|E| &\leq m_n,\qquad m_n\coloneqq\min\{n,p\}.
	\end{align*}
	Recall that
	\begin{align*}
		\hat{\delta}_{A,E}
		&=(\Sigma_{n,E}+A^\top A)^{-1}\lambda_{\mathrm L}s, \\
		\Sigma_{n,E} &\coloneqq\frac1nX_E^\top X_E.
	\end{align*}
	For the localization argument we specialize the general penalty construction to a deterministic family of penalty matrices indexed by the candidate support. For each candidate support \(F\), the matrix \(A_F\) is fixed, while the realized matrix \(A=A_E\) may still be random through the random support \(E\). This restriction is what permits the selection induced noise term to be controlled by a finite deterministic family of linear forms.
	
	\subsection{Attainable signed equicorrelation models}
	For fixed \(X\) and \(\lambda_{\mathrm L}>0\), define the deterministic class of attainable signed equicorrelation models
	\begin{align*}
		\mathcal C_n\coloneqq\Bigl\{(F,z): & \text{ there exists }y\in\mathbb R^n\text{ for which }F\text{ is the Lasso equicorrelation set} \\
		& \text{and }z\text{ is its equicorrelation sign vector}\Bigr\}.
	\end{align*}
	By construction, the realized signed equicorrelation model satisfies
	\begin{align*}
		(E,s)\in\mathcal C_n.
	\end{align*}
	The role of \(\mathcal C_n\) is geometric rather than probabilistic: the class may still be large, but it restricts the argument to signed supports that can actually arise as Lasso equicorrelation models. The effective complexity reduction below instead comes from localization around the reference support \(S_0\). General position is used only for attainable equicorrelation supports. In particular, for every \((F,z)\in\mathcal C_n\),
	\begin{align*}
		\operatorname{rank}(X_F) &=|F|, \\
		|F| &\leq m_n.
	\end{align*}
	For any \(F\subseteq[p]\), define
	\begin{align*}
		\Sigma_{n,F}\coloneqq\frac{1}{n}X_F^\top X_F.
	\end{align*}
	For every nonempty attainable support \(F\), general position gives
	\begin{align*}
		\Sigma_{n,F}\succ0.
	\end{align*}
	Let
	\begin{align*}
		\mathfrak F_n\coloneqq\{F\subseteq[p]:(F,z)\in\mathcal C_n\text{ for some }z\}
	\end{align*}
	denote the deterministic class of attainable equicorrelation supports.
	Let \(S_0\) denote the deterministic reference support, which in the applications below is the true support \(\operatorname{supp}(\beta_0)\), and write
	\begin{align*}
		s_0\coloneqq|S_0|.
	\end{align*}
	Let \(z_0\in\{-1,1\}^{s_0}\) be a fixed deterministic reference sign vector, for example \(z_0=\operatorname{sign}(\beta_{0,S_0})\) when \(S_0=\operatorname{supp}(\beta_0)\). Suppose that a deterministic family of penalty matrices
	\begin{align*}
		\{A_F:F\in\mathfrak F_n\cup\{S_0\}\},
		\qquad A_F\in\mathbb R^{q_F\times|F|},
	\end{align*}
	with \(q_F\geq0\), is specified, and write \(A\equiv A_E\) for the correction matrix corresponding to the realized support. Thus, \(A_F\) is fixed for each candidate support \(F\), while \(A_E\) is random only through the random support \(E\). This includes choices such as \(A_F=X_F\). For every nonempty attainable signed model \((F,z)\in\mathcal C_n\), define
	\begin{align*}
		a_{F,z}\coloneqq\frac{2\lambda_{\mathrm L}}{n}X_F(\Sigma_{n,F}+A_F^\top A_F)^{-1}z,
	\end{align*}
	and set
	\begin{align*}
		a_{\varnothing,\varnothing}\coloneqq0.
	\end{align*}
	For every nonempty attainable \(F\),
	\begin{align*}
		\Sigma_{n,F}+A_F^\top A_F\succ0,
	\end{align*}
	because \(\Sigma_{n,F}\succ0\) and \(A_F^\top A_F\succeq0\). Hence the inverse above exists automatically for every attainable nonempty support.
	Assume \(\Sigma_{n,S_0}+A_{S_0}^\top A_{S_0}\) is nonsingular and, when \(S_0\neq\varnothing\), define
	\begin{align*}
		a_0\coloneqq\frac{2\lambda_{\mathrm L}}{n}X_{S_0}(\Sigma_{n,S_0}+A_{S_0}^\top A_{S_0})^{-1}z_0,
	\end{align*}
	with \(a_0=0\) when \(S_0=\varnothing\). Finally, since \(A\equiv A_E\), the realized correction satisfies
	\begin{align*}
		\frac{2}{n}\langle X_E\hat{\delta}_{A,E},\varepsilon\rangle=\langle a_{E,s},\varepsilon\rangle.
	\end{align*}
	
	\subsection{Localization and effective correction amplification}
	
	For sets \(A,B\subseteq[p]\), write \(A\triangle B\coloneqq(A\setminus B)\cup(B\setminus A)\) for their symmetric difference. For \(r\geq0\), define the equicorrelation support localization event
	\begin{align*}
		G_n(r) &\coloneqq \{|E\triangle S_0| \leq r\}.
	\end{align*}
	Let
	\begin{align*}
		\mathcal R_n &\subseteq \{0,1,\ldots,p\}
	\end{align*}
	be nonempty and deterministic. Suppose that, for every
	\(r\in\mathcal R_n\),
	\begin{align*}
		\mathbb P(G_n(r)^c) &\leq \delta_n(r), \\
		\delta_n(r) & \in [0,1].
	\end{align*}
	In particular, if \(p\in\mathcal R_n\), then \(G_n(p)\) is the whole sample space, so one may take \(\delta_n(p)=0\).
	Define
	\begin{align*}
		q_n(r) &\coloneqq \mathbb P((E,s)\neq(S_0,z_0), \, G_n(r)),
	\end{align*}
	and
	\begin{align*}
		\psi_p(r) &\coloneqq \begin{cases} 0, & r=0, \\[1mm]
			r\log\!\left(\dfrac{ep}{r}\right), & 1\leq r\leq p.\end{cases}
	\end{align*}
	
	For every nonempty \(F\in\mathfrak F_n\cup\{S_0\}\), define
	\begin{align*}
		\Gamma_F &\coloneqq \|
		(\Sigma_{n,F}+A_F^\top A_F)^{-1}
		\Sigma_{n,F}
		(\Sigma_{n,F}+A_F^\top A_F)^{-1}
		\|_2,
	\end{align*}
	with
	\begin{align*}
		\Gamma_\varnothing &\coloneqq 0.
	\end{align*}
	Define the reference amplification
	\begin{align*}
		\Gamma_0 &\coloneqq \Gamma_{S_0},
	\end{align*}
	the local attainable equicorrelation amplification
	\begin{align*}
		\Gamma_n(r) &\coloneqq \max_{\substack{ (F,z)\in\mathcal C_n \\
				|F\triangle S_0|\leq r }} \Gamma_F,
	\end{align*}
	with the convention that the maximum over an empty class is zero, and the global attainable equicorrelation amplification
	\begin{align*}
		\overline\Gamma_n &\coloneqq \max_{(F,z)\in\mathcal C_n} \Gamma_F.
	\end{align*}
	All these quantities are deterministic. General position guarantees only that
	\(\overline\Gamma_n<\infty\) for each fixed problem instance. It does not
	imply a useful uniform with respect to \(n\) upper bound.
	
	\begin{assumption}[Sub-Gaussian noise]
		\label{ass:subgaussian-noise-lasso-correction}
		The noise vector \(\varepsilon\in\mathbb R^n\) satisfies
		\(\mathbb E\varepsilon=0\) and, for every fixed \(u\in\mathbb R^n\) and
		\(t\in\mathbb R\),
		\begin{align*}
			\mathbb E\exp(tu^\top\varepsilon) &\leq \exp\left(\frac{ \sigma^2t^2\|u\|_2^2 }{2}\right).
		\end{align*}
	\end{assumption}
	
	\subsection{Expectation bound}
	
	\begin{proposition}[Localized expectation bound over attainable equicorrelation models]
		\label{prop:localized-lasso-correction-expectation}
		Suppose Assumption~\ref{ass:subgaussian-noise-lasso-correction} and the equicorrelation support localization
		bounds above hold. For
		\(r\in\mathcal R_n\), define
		\begin{align*}
			k_n(r) &\coloneqq \min\{m_n,s_0+r\}.
		\end{align*}
		Set
		\begin{align*}
			\mathcal L_n(r) \coloneqq{} & \frac{\sigma\lambda_{\mathrm L}}{\sqrt n} (\sqrt{k_n(r)\Gamma_n(r)} + \sqrt{s_0\Gamma_0}) q_n(r) \\
			& \times \sqrt{ k_n(r) + \psi_p(r) + \log\left(\frac{1}{q_n(r)}\right) },
		\end{align*}
		and
		\begin{align*}
			\mathcal N_n(r) \coloneqq{} & \frac{\sigma\lambda_{\mathrm L}}{\sqrt n} (\sqrt{m_n\overline\Gamma_n} + \sqrt{s_0\Gamma_0}) \delta_n(r) \\
			& \times \sqrt{ m_n\log\left(\frac{ep}{m_n}\right) + \log\left(\frac{1}{\delta_n(r)}\right) }.
		\end{align*}
		Terms of the form
		\(u\sqrt{c+\log(1/u)}\) are interpreted as zero when \(u=0\).
		Then
		\begin{align*}
			\left| \mathbb E\left[ \frac{2}{n} \langle X_E\hat{\delta}_{A,E}, \varepsilon \rangle \right] \right| & \lesssim \inf_{r\in\mathcal R_n} \left\{\mathcal L_n(r) + \mathcal N_n(r)\right\}.
		\end{align*}
		The implicit constant is universal and independent of \(r\).
	\end{proposition}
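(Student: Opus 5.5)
The starting point is the identity \(\frac{2}{n}\langle X_E\hat{\delta}_{A,E},\varepsilon\rangle=\langle a_{E,s},\varepsilon\rangle\), together with the fact that \(a_0\) is deterministic, so that \(\mathbb E\langle a_0,\varepsilon\rangle=0\) by Assumption~\ref{ass:subgaussian-noise-lasso-correction}. Fix \(r\in\mathcal R_n\) and write
\begin{align*}
\mathbb E\langle a_{E,s},\varepsilon\rangle
=\mathbb E\bigl[\langle a_{E,s}-a_0,\varepsilon\rangle\mathbf 1\{(E,s)\neq(S_0,z_0),\,G_n(r)\}\bigr]
+\mathbb E\bigl[\langle a_{E,s}-a_0,\varepsilon\rangle\mathbf 1\{G_n(r)^c\}\bigr].
\end{align*}
This holds because on the event \(\{(E,s)=(S_0,z_0)\}\) the difference \(a_{E,s}-a_0\) vanishes, and that event lies inside \(G_n(r)\). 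The first term will produce \(\mathcal L_n(r)\) and the second \(\mathcal N_n(r)\). Taking the infimum over \(r\) then finishes the proof, since the constants will be universal.

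\textbf{Norm control.} For a nonempty attainable \((F,z)\), set \(M_F=(\Sigma_{n,F}+A_F^\top A_F)^{-1}\). Then
\begin{align*}
\|a_{F,z}\|_2^2=\frac{4\lambda_{\mathrm L}^2}{n}\,z^\top M_F\Sigma_{n,F}M_F z\le \frac{4\lambda_{\mathrm L}^2}{n}\,|F|\,\Gamma_F .
\end{align*}
Hence \(\|a_{F,z}\|_2\le 2\lambda_{\mathrm L}\sqrt{|F|\Gamma_F/n}\), and in the same way \(\|a_0\|_2\le 2\lambda_{\mathrm L}\sqrt{s_0\Gamma_0/n}\).

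On \(G_n(r)\) we have \(|F|\le \min\{m_n,s_0+r\}=k_n(r)\), so \(\|a_{F,z}-a_0\|_2\lesssim \lambda_{\mathrm L}n^{-1/2}\bigl(\sqrt{k_n(r)\Gamma_n(r)}+\sqrt{s_0\Gamma_0}\bigr)\). Globally, \(|F|\le m_n\) gives \(\|a_{F,z}-a_0\|_2\lesssim\lambda_{\mathrm L}n^{-1/2}\bigl(\sqrt{m_n\overline\Gamma_n}+\sqrt{s_0\Gamma_0}\bigr)\). These are exactly the amplification prefactors appearing in the statement.

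\textbf{Maximal inequality on small events.} The key lemma is the following. Let \(\{u_j\}_{j\in J}\) be a finite family of deterministic vectors with \(\|u_j\|_2\le K\), and let \(B\) be an event with \(\mathbb P(B)\le\eta\). Then
\begin{align*}
\mathbb E\Bigl[\max_{j}|\langle u_j,\varepsilon\rangle|\,\mathbf 1_B\Bigr]\lesssim \sigma K\,\eta\sqrt{\log|J|+\log(1/\eta)} .
\end{align*}
To prove it, write \(Z=\max_j|\langle u_j,\varepsilon\rangle|\). The sub-Gaussian tail and a union bound give \(\mathbb P(Z>t)\le 2|J|e^{-t^2/(2\sigma^2K^2)}\). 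Then
\begin{align*}
\mathbb E[Z\mathbf 1_B]\le t_0\,\mathbb P(B)+\int_{t_0}^\infty \mathbb P(Z>t)\,dt ,
\end{align*}
and choosing \(t_0=\sigma K\sqrt{2\log(2|J|/\eta)}\) makes the tail integral \(\lesssim \sigma K\eta\). One needs \(t\mapsto t\sqrt{c+\log(1/t)}\) to be increasing on the relevant range so that \(\mathbb P(B)\) can be replaced by its upper bound \(\eta\); this holds for \(c\ge1\) after adjusting constants, and \(\eta=0\) is trivial.

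We apply the lemma with \(B=\{(E,s)\neq(S_0,z_0)\}\cap G_n(r)\) and \(\eta=q_n(r)\). The index family is \(\{a_{F,z}-a_0\}\) over \((F,z)\in\mathcal C_n\) with \(|F\triangle S_0|\le r\). The number of supports \(F\) with \(|F\triangle S_0|\le r\) is at most \(\sum_{j\le r}\binom pj\le (ep/r)^r\), so its logarithm is bounded by \(\psi_p(r)\). The number of sign vectors for each \(F\) is at most \(2^{|F|}\le 2^{k_n(r)}\). Together these give \(\log|J|\lesssim k_n(r)+\psi_p(r)\), which yields \(\mathcal L_n(r)\).

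For the second term we take \(B=G_n(r)^c\) with \(\eta=\delta_n(r)\). The family is all of \(\mathcal C_n\), with \(|F|\le m_n\), so there are at most \(\sum_{j\le m_n}\binom pj 2^j\) signed models. The logarithm of this count is \(\lesssim m_n\log(ep/m_n)+m_n\asymp m_n\log(ep/m_n)\), which yields \(\mathcal N_n(r)\).

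\textbf{Main obstacle.} The delicate step is the small-event maximal inequality. Two points need care: monotonicity in \(\eta\), so that the bound can be used with \(\delta_n(r)\) and \(q_n(r)\) rather than the exact probabilities, and the cardinality count in the degenerate cases \(r=0\) and \(m_n=p\). When \(r=0\), \(\psi_p(0)=0\) and only the \(2^{s_0}\) sign patterns remain, which are absorbed into \(k_n(0)\). When \(m_n=p\), \(\log(ep/m_n)=1\), which still dominates the \(m_n\log 2\) coming from the signs.
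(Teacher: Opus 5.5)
Your proposal is correct and follows essentially the same route as the paper: you center at the deterministic $a_0$, split into the localized mismatch event and $G_n(r)^c$, bound $\|a_{F,z}-a_0\|_2$ through $\Gamma_F$, count signed models via the binomial bound, and apply a union-bound maximal inequality on small-probability events (the paper's Lemma~\ref{lem:rare-event-maximum-lasso-correction}). Two cosmetic points: state your small-event lemma with $\log(2|J|/\eta)$, which is what your choice of $t_0$ yields (the version with $\log|J|$ fails for $|J|=1$, $\eta=1$); and drop the monotonicity worry, since the lemma already accepts any $\eta\ge\mathbb P(B)$, while the extra $\log 2$ is absorbed because $k_n(r)+\psi_p(r)\ge 1$ and $m_n\log(ep/m_n)\ge 1$ in every nondegenerate case.
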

	
	\begin{proof}
		\medskip\noindent\emph{Step 1: Centering and decomposition.}
		Recall that \(a_0\) is the deterministic reference vector associated with \((S_0,z_0)\). We first rewrite the stochastic linear term so that it vanishes whenever the realized signed model coincides with the reference model.
		Because \(a_0\) is deterministic and \(\mathbb E\varepsilon=0\),
		\begin{align*}
			\mathbb E\langle a_0,\varepsilon\rangle
			&=\langle a_0,\mathbb E\varepsilon\rangle
			=0.
		\end{align*}
		Therefore subtracting \(\langle a_0,\varepsilon\rangle\) does not change the expectation that we want to bound. Define
		\begin{align*}
			Y
			&\coloneqq
			\frac{2}{n}\langle X_E\hat{\delta}_{A,E},\varepsilon\rangle
			-\langle a_0,\varepsilon\rangle.
		\end{align*}
		Using
		\begin{align*}
			\frac{2}{n}\langle X_E\hat{\delta}_{A,E},\varepsilon\rangle
			&=\langle a_{E,s},\varepsilon\rangle,
		\end{align*}
		we obtain
		\begin{align*}
			Y
			&=\langle a_{E,s}-a_0,\varepsilon\rangle.
		\end{align*}
		Hence \(Y=0\) whenever \((E,s)=(S_0,z_0)\), and therefore
		\begin{align*}
			Y
			&=Y\mathbf1_{\{(E,s)\neq(S_0,z_0)\}}.
		\end{align*}
		Moreover,
		\begin{align*}
			\mathbb EY
			&=\frac{2}{n}\mathbb E\bigl[\langle X_E\hat{\delta}_{A,E},\varepsilon\rangle\bigr].
		\end{align*}
		
		Fix \(r\in\mathcal R_n\). Then
		\begin{align*}
			|\mathbb EY|
			&\leq
			\left|\mathbb E\left[Y\mathbf1_{G_n(r)}\right]\right|
			+
			\left|\mathbb E\left[Y\mathbf1_{G_n(r)^c}\right]\right|.
		\end{align*}
		\medskip\noindent\emph{Step 2: Localized event.}
		Define the deterministic local candidate class
		\begin{align*}
			\mathcal A_r &\coloneqq \{(F,z)\in\mathcal C_n:\ |F\triangle S_0|\leq r,\ (F,z)\neq(S_0,z_0)\},
		\end{align*}
		and define the event on which the realized signed model is locally admissible but differs from the reference model:
		\begin{align*}
			B_r &\coloneqq \{(E,s)\neq(S_0,z_0)\}\cap G_n(r).
		\end{align*}
		If \(\mathcal A_r=\varnothing\), the localized contribution is zero, so assume \(\mathcal A_r\neq\varnothing\). On \(B_r\), the realized signed model belongs to \(\mathcal A_r\), and
		\begin{align*}
			|Y|\mathbf1_{G_n(r)}
			&\leq \mathbf1_{B_r}\max_{(F,z)\in\mathcal A_r}
			|\langle a_{F,z}-a_0,\varepsilon\rangle|.
		\end{align*}
		For every \((F,z)\in\mathcal A_r\), localization and attainability give
		\begin{align*}
			|F| &\leq \min\{m_n,s_0+r\}=k_n(r).
		\end{align*}
		By Lemma~\ref{lem:correction-vector-norm-bound} in Appendix~\ref{app:technical-lemmas},
		\begin{align*}
			\|a_{F,z}\|_2 &\leq \frac{2\lambda_{\mathrm L}}{\sqrt n}\sqrt{k_n(r)\Gamma_n(r)}, \\
			\|a_0\|_2 &\leq \frac{2\lambda_{\mathrm L}}{\sqrt n}\sqrt{s_0\Gamma_0}.
		\end{align*}
		Hence, by the triangle inequality,
		\begin{align*}
			\|a_{F,z}-a_0\|_2
			&\leq \frac{2\lambda_{\mathrm L}}{\sqrt n}
			(\sqrt{k_n(r)\Gamma_n(r)}+\sqrt{s_0\Gamma_0}).
		\end{align*}
		For a fixed candidate \((F,z)\), the vector \(a_{F,z}-a_0\) is deterministic. Assumption~\ref{ass:subgaussian-noise-lasso-correction} therefore implies that
		\begin{align*}
			Z_{F,z} &\coloneqq \langle a_{F,z}-a_0,\varepsilon\rangle
		\end{align*}
		is sub-Gaussian with tail
		\begin{align*}
			\mathbb P(|Z_{F,z}|>t)
			&\leq 2\exp\left(-\frac{t^2}{2v_{\mathrm{loc}}(r)^2}\right),
		\end{align*}
		where, temporarily within this proof,
		\begin{align*}
			v_{\mathrm{loc}}(r)
			&\coloneqq \frac{2\sigma\lambda_{\mathrm L}}{\sqrt n}
			(\sqrt{k_n(r)\Gamma_n(r)}+\sqrt{s_0\Gamma_0}).
		\end{align*}
		
		We also need the number of candidate linear forms. Since \(F\mapsto F\triangle S_0\) is injective, the number of supports satisfying \(|F\triangle S_0|\leq r\) is at most
		\begin{align*}
			\sum_{j=0}^r\binom pj.
		\end{align*}
		Every such attainable support has at most \(k_n(r)\) coordinates and hence at most \(2^{k_n(r)}\) possible sign vectors. Therefore,
		\begin{align*}
			|\mathcal A_r|
			&\leq 2^{k_n(r)}\sum_{j=0}^r\binom{p}{j}.
		\end{align*}
		For \(r=0\), the bound \(\log(2|\mathcal A_0|)\lesssim k_n(0)+\psi_p(0)\) is immediate. For \(1\leq r\leq p\), Lemma~\ref{lem:binomial-sum-bound} gives
		\begin{align*}
			\sum_{j=0}^r\binom{p}{j}
			&\leq \left(\frac{ep}{r}\right)^r
		\end{align*}
		and hence
		\begin{align*}
			\log(2|\mathcal A_r|)
			&\leq \log 2+k_n(r)\log 2+r\log\left(\frac{ep}{r}\right) \\
			&=\log 2+k_n(r)\log 2+\psi_p(r) \\
			&\lesssim k_n(r)+\psi_p(r).
		\end{align*}
		
		If \(q_n(r)=0\), then \(B_r\) is a null event and the localized contribution is zero, so assume \(q_n(r)>0\).
		Applying Lemma~\ref{lem:rare-event-maximum-lasso-correction} from Appendix~\ref{app:technical-lemmas} with \(B=B_r\), \(\alpha=q_n(r)\), \(N=|\mathcal A_r|\), and \(v=v_{\mathrm{loc}}(r)\) yields
		\begin{align*}
			|\mathbb E[Y\mathbf1_{G_n(r)}]|
			&\leq \mathbb E[\mathbf1_{B_r}\max_{(F,z)\in\mathcal A_r}|Z_{F,z}|] \\
			& \lesssim v_{\mathrm{loc}}(r)\,q_n(r)
			\sqrt{\log\left(\frac{2|\mathcal A_r|}{q_n(r)}\right)}.
		\end{align*}
		Now separate the two contributions inside the logarithm:
		\begin{align*}
			\log\left(\frac{2|\mathcal A_r|}{q_n(r)}\right)
			&= \log(2|\mathcal A_r|)+\log\left(\frac1{q_n(r)}\right) \\
			& \lesssim k_n(r)+\psi_p(r)+\log\left(\frac1{q_n(r)}\right).
		\end{align*}
		Substituting the expression for \(v_{\mathrm{loc}}(r)\) and absorbing universal constants into \(\lesssim\), we obtain exactly
		\begin{align*}
			|\mathbb E[Y\mathbf1_{G_n(r)}]|
			& \lesssim \frac{\sigma\lambda_{\mathrm L}}{\sqrt n}
			(\sqrt{k_n(r)\Gamma_n(r)}+\sqrt{s_0\Gamma_0})q_n(r) \\
			& \qquad\times
			\sqrt{k_n(r)+\psi_p(r)+\log\left(\frac1{q_n(r)}\right)} \\
			&= \mathcal L_n(r).
		\end{align*}
		\medskip\noindent\emph{Step 3: Nonlocalized event.}
		On \(G_n(r)^c\), we no longer know that the selected support lies near \(S_0\). We therefore maximize over the entire attainable class:
		\begin{align*}
			|Y|\mathbf1_{G_n(r)^c}
			&\leq \mathbf1_{G_n(r)^c}\max_{(F,z)\in\mathcal C_n}
			|\langle a_{F,z}-a_0,\varepsilon\rangle|.
		\end{align*}
		For every \((F,z)\in\mathcal C_n\), attainability gives \(|F|\leq m_n\), and the definition of \(\overline\Gamma_n\) gives \(\Gamma_F\leq\overline\Gamma_n\). Hence Lemma~\ref{lem:correction-vector-norm-bound} gives
		\begin{align*}
			\|a_{F,z}\|_2 &\leq \frac{2\lambda_{\mathrm L}}{\sqrt n}\sqrt{m_n\overline\Gamma_n},
		\end{align*}
		and therefore
		\begin{align*}
			\|a_{F,z}-a_0\|_2
			&\leq \frac{2\lambda_{\mathrm L}}{\sqrt n}
			(\sqrt{m_n\overline\Gamma_n}+\sqrt{s_0\Gamma_0}).
		\end{align*}
		Thus every fixed linear form
		\begin{align*}
			Z_{F,z} &\coloneqq \langle a_{F,z}-a_0,\varepsilon\rangle
		\end{align*}
		has common sub-Gaussian scale bounded by
		\begin{align*}
			v_{\mathrm{out}}
			&\coloneqq \frac{2\sigma\lambda_{\mathrm L}}{\sqrt n}
			(\sqrt{m_n\overline\Gamma_n}+\sqrt{s_0\Gamma_0}).
		\end{align*}
		
		To apply Lemma~\ref{lem:rare-event-maximum-lasso-correction} on the nonlocalized event, we first bound the global candidate class. A support of size \(j\) has at most \(2^j\) sign vectors, and every attainable support has size at most \(m_n\). Therefore
		\begin{align*}
			|\mathcal C_n|
			&\leq \sum_{j=0}^{m_n}2^j\binom pj
			\leq 2^{m_n}\sum_{j=0}^{m_n}\binom pj
			\leq 2^{m_n}\left(\frac{ep}{m_n}\right)^{m_n},
		\end{align*}
		where the last inequality holds for \(1\leq m_n\leq p\). Consequently,
		\begin{align*}
			\log(2|\mathcal C_n|)
			&\leq \log 2+m_n\log 2+m_n\log\left(\frac{ep}{m_n}\right)
			\lesssim m_n\log\left(\frac{ep}{m_n}\right).
		\end{align*}
		
		If \(\delta_n(r)=0\), then \(G_n(r)^c\) is a null event and the nonlocalized contribution is zero, so assume \(\delta_n(r)>0\).
		Using this lemma with \(B=G_n(r)^c\), \(\alpha=\delta_n(r)\), \(N=|\mathcal C_n|\), and \(v=v_{\mathrm{out}}\) gives
		\begin{align*}
			|\mathbb E[Y\mathbf1_{G_n(r)^c}]|
			&\leq \mathbb E[\mathbf1_{G_n(r)^c}\max_{(F,z)\in\mathcal C_n}|Z_{F,z}|] \\
			& \lesssim v_{\mathrm{out}}\,\delta_n(r)
			\sqrt{\log\left(\frac{2|\mathcal C_n|}{\delta_n(r)}\right)}.
		\end{align*}
		Using
		\begin{align*}
			\log\left(\frac{2|\mathcal C_n|}{\delta_n(r)}\right)
			&= \log(2|\mathcal C_n|)+\log\left(\frac1{\delta_n(r)}\right) \\
			& \lesssim m_n\log\left(\frac{ep}{m_n}\right)+\log\left(\frac1{\delta_n(r)}\right),
		\end{align*}
		and substituting the common scale \(v_{\mathrm{out}}\), we obtain
		\begin{align*}
			|\mathbb E[Y\mathbf1_{G_n(r)^c}]|
			& \lesssim \frac{\sigma\lambda_{\mathrm L}}{\sqrt n}
			(\sqrt{m_n\overline\Gamma_n}+\sqrt{s_0\Gamma_0})\delta_n(r) \\
			& \qquad\times
			\sqrt{m_n\log\left(\frac{ep}{m_n}\right)+\log\left(\frac1{\delta_n(r)}\right)} \\
			&= \mathcal N_n(r).
		\end{align*}
		
		Combining the localized and nonlocalized bounds, recalling the definition of \(Y\), and taking the infimum over \(r\in\mathcal R_n\) gives
		\begin{align*}
			\left|\mathbb E\left[\frac{2}{n}\langle X_E\hat{\delta}_{A,E},\varepsilon\rangle\right]\right|
			& \lesssim \inf_{r\in\mathcal R_n}\left\{\mathcal L_n(r)+\mathcal N_n(r)\right\}.
		\end{align*}
	\end{proof}
	
	\paragraph{Connection to expected prediction improvement.}
	Combining the decomposition in Section~\ref{sec:prediction-improvement} with Proposition~\ref{prop:localized-lasso-correction-expectation} gives
	\begin{align*}
		\mathbb E\Delta(\hat{\beta}_{\mathrm L},\hat{\beta}_A)
		&\geq \mathbb E[H_{E,s}(A)]
		- C\inf_{r\in\mathcal R_n}\left\{\mathcal L_n(r)+\mathcal N_n(r)\right\},
	\end{align*}
	for a universal constant \(C>0\). Consequently, any regime in which the expected positive gain dominates the localization remainder yields strictly smaller expected empirical prediction error for the generalized estimator than for the Lasso. Proposition~\ref{prop:sufficient-penalty-conditions} supplies explicit lower bounds for the gain, while the results below give interpretable conditions under which the remainder is small.
	
	\begin{remark}[Balance of gain and localization remainder]
		The expected comparison is favorable whenever, for some localization radius \(r\), the combined remainder
		\(\mathcal L_n(r)+\mathcal N_n(r)\) is small relative to the expected positive gain. To illustrate the local scale, suppose that \(r=O(s_0)\), \(k_n(r)=O(s_0)\), the relevant amplification factors are bounded, and the probability logarithms do not dominate the corresponding model complexity terms. If, in addition, the expected positive gain is comparable to \(\lambda_{\mathrm L}^2s_0\), then
		\begin{align*}
			\mathcal L_n(r)
			&\lesssim
			\frac{\sigma\lambda_{\mathrm L}}{\sqrt n}
			s_0 q_n(r)
			\sqrt{\log\left(\frac{ep}{s_0}\right)}.
		\end{align*}
		At the usual scale
		\(\lambda_{\mathrm L}\asymp\sigma\sqrt{(\log p)/n}\),
		the ratio of this local contribution to the positive gain is therefore of order
		\begin{align*}
			q_n(r)
			\sqrt{\frac{\log(ep/s_0)}{\log p}}.
		\end{align*}
		The nonlocalized term has a different role: because localization is lost on \(G_n(r)^c\), it involves the global model complexity and requires sufficiently strong control of \(\delta_n(r)\). In particular, \(\delta_n(r)\to0\) alone need not make this contribution negligible. See Section~\ref{subsec:literature-support} for Lasso results supporting such localization and signed model stability.
	\end{remark}

	\subsection{Interpretation of local and global amplification}
	\label{subsec:spectral-interpretation}
	
	For \(1\leq m\leq p\), define the sparse minimum eigenvalue
	\begin{align*}
		\phi_{\min,n}(m)
		&\coloneqq
		\min_{\substack{F\subseteq[p] \\1\leq |F|\leq m}}
		\lambda_{\min}\left(\frac1nX_F^\top X_F\right).
	\end{align*}
	Every attainable support in the localization window satisfies
	\begin{align*}
		|F| &\leq k_n(r).
	\end{align*}
	Hence, whenever \(k_n(r)\geq1\) and
	\begin{align*}
		\phi_{\min,n}(k_n(r)) &>0,
	\end{align*}
	Lemma~\ref{lem:correction-vector-norm-bound} gives
	\begin{align*}
		\Gamma_n(r)
		&\leq \phi_{\min,n}(k_n(r))^{-1}.
	\end{align*}
	Similarly, if \(s_0\geq1\) and \(\phi_{\min,n}(s_0)>0\), then
	\begin{align*}
		\Gamma_0
		&\leq \phi_{\min,n}(s_0)^{-1},
	\end{align*}
	while \(\Gamma_0=0\) when \(s_0=0\). Thus the finite sample proposition does not require a particular sparse eigenvalue condition; such conditions are needed only when one wishes to obtain more explicit bounds on the local amplification factors.
	
	\begin{remark}[Random design interpretation]
		The local sparse eigenvalue conditions above are standard in many random design settings. In particular, Gaussian and subgaussian designs admit restricted eigenvalue or approximate isometry bounds uniformly over sufficiently sparse coordinate sets with high probability; see, for example, \citet{baraniuk2008simple} and \citet{raskutti2010restricted}. Thus, when localization gives \(k_n(r_n)=O(s_0)\) and \(s_0\) remains within a sparsity regime for which such uniform eigenvalue bounds hold, bounded local amplification \(\Gamma_n(r_n)=O(1)\) is compatible with familiar sparse random design settings.
	\end{remark}
	
	The global factor \(\overline\Gamma_n\) has a different role. It enters only on the complement of the localization event, where the proof no longer restricts the realized support to a neighborhood of \(S_0\) and therefore maximizes over the full attainable class. Proposition~\ref{prop:localized-lasso-correction-expectation} does not assume that \(\overline\Gamma_n\) is uniformly bounded. For regularized corrections, however, the quadratic penalty can itself provide such control.
	
	\begin{remark}[Global control from quadratic regularization]
		Suppose that there exists \(\rho>0\) such that
		\begin{align*}
			A_F^\top A_F &\succeq \rho I_{|F|}
		\end{align*}
		for every nonempty \(F\in\mathfrak F_n\cup\{S_0\}\). Writing
		\begin{align*}
			C_F &\coloneqq \Sigma_{n,F}+A_F^\top A_F,
		\end{align*}
		we have \(0\preceq\Sigma_{n,F}\preceq C_F\) and \(C_F\succeq\rho I_{|F|}\). Hence
		\begin{align*}
			C_F^{-1}\Sigma_{n,F}C_F^{-1}
			&\preceq C_F^{-1}
			\preceq \rho^{-1}I_{|F|},
		\end{align*}
		and therefore
		\begin{align*}
			\overline\Gamma_n &\leq \rho^{-1},
			\qquad
			\Gamma_0\leq\rho^{-1}
		\end{align*}
		when \(S_0\neq\varnothing\). For the isotropic choice
		\begin{align*}
			A_F^\top A_F &= \lambda_{\mathrm R}I_{|F|},
			\qquad \lambda_{\mathrm R}>0,
		\end{align*}
		diagonalizing \(\Sigma_{n,F}\) gives
		\begin{align*}
			\Gamma_F
			&=\max_i\frac{\mu_i}{(\mu_i+\lambda_{\mathrm R})^2}
			\leq\frac{1}{4\lambda_{\mathrm R}},
		\end{align*}
		where \(\mu_i\) are the eigenvalues of \(\Sigma_{n,F}\). Thus
		\begin{align*}
			\overline\Gamma_n &\leq\frac{1}{4\lambda_{\mathrm R}},
		\end{align*}
		independently of the smallest eigenvalue of the attainable Gram matrices. Least squares refitting, corresponding to \(A_F=0\), does not have this regularization based protection.
	\end{remark}
	
	For a general design, \(\delta_n(r_n)\to0\) alone does not imply \(\mathcal N_n(r_n)\to0\). A sufficient condition for the nonlocalized term to vanish is
	\begin{align*}
		\frac{\sigma^2\lambda_{\mathrm L}^2}{n}
		&(\sqrt{m_n\overline\Gamma_n}+\sqrt{s_0\Gamma_0})^2
		\delta_n(r_n)^2 \\
		&\times
		\left[
		m_n\log\left(\frac{ep}{m_n}\right)
		+\log\left(\frac1{\delta_n(r_n)}\right)
		\right]
		\longrightarrow0.
	\end{align*}
	Thus the finite sample bound remains valid without a global spectral lower bound or a uniform penalty bound. On the complement of the localization event, the localization failure probability must offset the combined effects of model complexity, tuning, and global amplification. When the quadratic penalty supplies uniform control as above, the conditioning component of this remainder is automatically bounded.

	\subsection{Supporting evidence from the Lasso localization literature}
	\label{subsec:literature-support}
	
	The localization quantities used above have close precedents in the Lasso literature. Results such as \citet{zhanghuang2008} and \citet{meinshausenyu2009} show, under sparse eigenvalue type conditions, that Lasso selected models or KKT equality sets can remain sparse with high probability. In exactly sparse settings, such results are compatible with localization windows of order
	\begin{align*}
		|E\triangle S_0| &= O(s_0).
	\end{align*}
	Stronger model selection results provide related evidence for signed model stability. Here, sign consistency refers to recovery of the full coefficient sign vector, while variable selection consistency refers only to recovery of the support. Under additional incoherence, signal strength, and tuning conditions, results such as \citet{zhao2006model} and \citet{wainwright2009sharp} establish regimes in which the Lasso recovers the true signed support with probability tending to one. See also \citet[Chapter~11]{hastie2015statistical} for a broader discussion of Lasso support recovery and variable selection consistency. In settings where the Lasso support coincides with its equicorrelation set, as occurs almost surely under general position with an absolutely continuous response law, signed support recovery gives
	\begin{align*}
		(E,s) &= (S_0,z_0)
	\end{align*}
	on the recovery event. Such results therefore provide relevant examples of regimes in which both localization failure and signed mismatch are rare. Stronger recovery analyses can also provide quantitative control of recovery failure probabilities under additional assumptions, but no particular rate is assumed here.
	
	The cited literature is therefore used only to indicate established regimes relevant to the localization quantities in Proposition~\ref{prop:localized-lasso-correction-expectation}. The proposition itself is a finite sample result stated directly in terms of \(r_n\), \(\delta_n(r_n)\), \(q_n(r_n)\), and the spectral quantities, and does not require exact support or sign recovery.
	
	\section{Conclusion}
	
	The generalized quadratic correction provides a common representation for several Lasso based refitting procedures and makes the effect of the second stage penalty explicit through \(A^\top A\). Prediction improvement over the Lasso decomposes into a positive gain term and a selection dependent noise term; the localization bound controls the latter while keeping model complexity and conditioning visible. The resulting framework therefore identifies concrete conditions under which a structured correction can improve prediction without making exact support recovery a standing assumption, while also suggesting new refitting approaches through alternative choices of the quadratic penalty.
	
	\clearpage
	\appendix
	
	\section{Derivation of the Prediction Improvement Decomposition}
	\label{app:prediction-decomposition}
	
	The decomposition used in Section~\ref{sec:prediction-improvement} follows the same projection argument as in the Lasso--Ridge analysis of \citet{liu2026prediction}. We record the generalized version here for completeness.
	
	If \(E=\varnothing\), then \(\hat{\delta}_A=0\), and hence
	\begin{align*}
		\Delta(\hat{\beta}_{\mathrm L},\hat{\beta}_A)=0.
	\end{align*}
	Under the zero dimensional conventions adopted above, the right-hand side of~\eqref{eq:prediction-decomposition} is also zero. We therefore assume \(E\neq\varnothing\) for the derivation. Starting from the definition of prediction improvement and using \(\hat{\beta}_A=\hat{\beta}_{\mathrm L}+\hat{\delta}_A\), we have
	\begin{align}
		\Delta(\hat{\beta}_{\mathrm L},\hat{\beta}_A)
		&=\frac{1}{n}\|X\beta_0-X\hat{\beta}_{\mathrm L}\|_2^2
		-\frac{1}{n}\|X\beta_0-X\hat{\beta}_{\mathrm L}-X\hat{\delta}_A\|_2^2 \notag\\
		&=\frac{2}{n}\langle X\beta_0-X\hat{\beta}_{\mathrm L},X\hat{\delta}_A\rangle
		-\frac{1}{n}\langle X\hat{\delta}_A,X\hat{\delta}_A\rangle \notag\\
		&=\frac{2}{n}\langle X\beta_0-X_E\hat{\beta}_{\mathrm L,E},X_E\hat{\delta}_{A,E}\rangle
		-\frac{1}{n}\langle X_E\hat{\delta}_{A,E},X_E\hat{\delta}_{A,E}\rangle.
		\label{eq:app-prediction-basic-expansion}
	\end{align}
	By Proposition~\ref{prop:lasso-closed-form},
	\begin{align*}
		\hat{\beta}_{\mathrm L,E}
		=(X_E^\top X_E)^{-1}(X_E^\top y-n\lambda_{\mathrm L}s).
	\end{align*}
	Therefore, using \(y=X\beta_0+\varepsilon\),
	\begin{align*}
		X\beta_0-X_E\hat{\beta}_{\mathrm L,E}
		&=X\beta_0-X_E(X_E^\top X_E)^{-1}(X_E^\top y-n\lambda_{\mathrm L}s)\\
		&=X\beta_0-X_E(X_E^\top X_E)^{-1}
		\bigl(X_E^\top(X\beta_0+\varepsilon)-n\lambda_{\mathrm L}s\bigr)\\
		&=P_{X_E}^{\perp}X\beta_0-P_{X_E}\varepsilon
		+n\lambda_{\mathrm L}X_E(X_E^\top X_E)^{-1}s,
	\end{align*}
	where
	\begin{align*}
		P_{X_E}\coloneqq X_E(X_E^\top X_E)^{-1}X_E^\top,
		\qquad
		P_{X_E}^{\perp}\coloneqq I_n-P_{X_E}.
	\end{align*}
	Substituting this representation into~\eqref{eq:app-prediction-basic-expansion} gives
	\begin{align}
		\Delta(\hat{\beta}_{\mathrm L},\hat{\beta}_A)
		&=\frac{2}{n}\langle P_{X_E}^{\perp}X\beta_0,X_E\hat{\delta}_{A,E}\rangle
		-\frac{2}{n}\langle P_{X_E}\varepsilon,X_E\hat{\delta}_{A,E}\rangle \notag\\
		&\quad
		+\frac{2}{n}\langle n\lambda_{\mathrm L}X_E(X_E^\top X_E)^{-1}s,
		X_E\hat{\delta}_{A,E}\rangle
		-\frac{1}{n}\langle X_E\hat{\delta}_{A,E},X_E\hat{\delta}_{A,E}\rangle.
		\label{eq:app-prediction-projection-expansion}
	\end{align}
	The first inner product vanishes because \(P_{X_E}^{\perp}X_E=0\). Moreover, since \(X_E\hat{\delta}_{A,E}\in\operatorname{col}(X_E)\),
	\begin{align*}
		\langle P_{X_E}\varepsilon,X_E\hat{\delta}_{A,E}\rangle
		&=\langle \varepsilon,X_E\hat{\delta}_{A,E}\rangle,
	\end{align*}
	and
	\begin{align*}
		\frac{2}{n}\langle n\lambda_{\mathrm L}X_E(X_E^\top X_E)^{-1}s,
		X_E\hat{\delta}_{A,E}\rangle
		&=2\lambda_{\mathrm L}\langle \hat{\delta}_{A,E},s\rangle.
	\end{align*}
	Consequently,
	\begin{align*}
		\Delta(\hat{\beta}_{\mathrm L},\hat{\beta}_A)
		&=2\lambda_{\mathrm L}\langle \hat{\delta}_{A,E},s\rangle
		-\frac{1}{n}\langle X_E\hat{\delta}_{A,E},X_E\hat{\delta}_{A,E}\rangle
		-\frac{2}{n}\langle X_E\hat{\delta}_{A,E},\varepsilon\rangle,
	\end{align*}
	as claimed in~\eqref{eq:prediction-decomposition}.
	
	\section{Proofs of Technical Lemmas}
	\label{app:technical-lemmas}
	This appendix collects three auxiliary ingredients used in the proof of Proposition~\ref{prop:localized-lasso-correction-expectation}. The first is a binomial sum bound, the second controls the maximum of finitely many Sub-Gaussian variables on an event of small probability, and the third gives deterministic norm and spectral bounds for the correction vectors.
	
	\begin{lemma}[Binomial sum bound]
		\label{lem:binomial-sum-bound}
		For every integer \(p\geq1\) and every integer \(1\leq r\leq p\),
		\begin{align*}
			\sum_{j=0}^r\binom{p}{j}
			&\leq \left(\frac{ep}{r}\right)^r.
		\end{align*}
	\end{lemma}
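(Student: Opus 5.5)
The plan is the standard Chernoff-type argument: dominate the partial binomial sum by a weighted full sum, then evaluate that sum with the binomial theorem. Fix \(x\in(0,1]\). For \(0\leq j\leq r\) we have \(x^{j-r}\geq1\), because \(j-r\leq0\) and \(x\leq1\). Hence
\begin{align*}
	\sum_{j=0}^r\binom{p}{j}
	\leq \sum_{j=0}^r\binom{p}{j}x^{j-r}
	\leq x^{-r}\sum_{j=0}^p\binom{p}{j}x^{j}
	= x^{-r}(1+x)^p.
\end{align*}
This step only uses that all the terms are nonnegative.

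Next I choose \(x=r/p\), which lies in \((0,1]\) since \(1\leq r\leq p\). This gives
\begin{align*}
	\sum_{j=0}^r\binom{p}{j}\leq \left(\frac{p}{r}\right)^r\left(1+\frac{r}{p}\right)^p.
\end{align*}

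Finally I apply the elementary inequality \(1+t\leq e^t\) with \(t=r/p\). It yields \((1+r/p)^p\leq e^{r}\), and therefore
\begin{align*}
	\sum_{j=0}^r\binom{p}{j}\leq \left(\frac{p}{r}\right)^r e^{r} = \left(\frac{ep}{r}\right)^r,
\end{align*}
which is the claimed bound.

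There is no real obstacle. The only points to check are that the chosen \(x\) lies in \((0,1]\), which is exactly where the hypothesis \(r\geq1\) is needed to avoid division by zero, and that \(r\leq p\) ensures \(x\leq 1\). The boundary case \(r=p\) is covered as well: it gives \(2^p\leq e^p\), which is consistent.
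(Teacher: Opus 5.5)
Your proposal is correct and follows essentially the same argument as the paper. The paper also compares the partial sum with $(1+x)^p$ at $x=r/p$ via $x^j\ge x^r$ for $j\le r$, and then applies $1+u\le e^u$.
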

	
	\begin{proof}
		Set \(x=r/p\in(0,1]\). For every \(0\leq j\leq r\), we have \(x^j\geq x^r\). Hence, by the binomial theorem,
		\begin{align*}
			(1+x)^p
			=\sum_{j=0}^p\binom{p}{j}x^j
			\geq \sum_{j=0}^r\binom{p}{j}x^j
			\geq x^r\sum_{j=0}^r\binom{p}{j}.
		\end{align*}
		Therefore,
		\begin{align*}
			\sum_{j=0}^r\binom{p}{j}
			&\leq x^{-r}(1+x)^p \\
			&=\left(\frac{p}{r}\right)^r\left(1+\frac{r}{p}\right)^p \\
			&\leq \left(\frac{p}{r}\right)^r e^r
			=\left(\frac{ep}{r}\right)^r,
		\end{align*}
		where the last inequality uses \(1+u\leq e^u\) for \(u\geq0\).
	\end{proof}
	
	\begin{lemma}[Maximum of Sub-Gaussian variables on an event of small probability]
		\label{lem:rare-event-maximum-lasso-correction}
		Let \(N\geq1\) and \(v>0\), and suppose that
		\(Z_1,\ldots,Z_N\) satisfy
		\begin{align*}
			\mathbb P(|Z_j|>t) &\leq 2\exp\left(-\frac{t^2}{2v^2}\right), \qquad t\geq0,
		\end{align*}
		for every \(j=1,\ldots,N\). Let
		\begin{align*}
			Z_{\max} &\coloneqq \max_{1\leq j\leq N}|Z_j|.
		\end{align*}
		If \(B\) is an event satisfying
		\begin{align*}
			\mathbb P(B) &\leq \alpha, \qquad \alpha\in[0,1],
		\end{align*}
		then
		\begin{align*}
			\mathbb E[ Z_{\max}\mathbf1_B ] & \lesssim v\alpha \sqrt{ \log\left(\frac{2N}{\alpha}\right) },
		\end{align*}
		where \(v\alpha\sqrt{\log(2N/\alpha)}\) is interpreted as zero when \(\alpha=0\). No independence among \(Z_1,\ldots,Z_N\), and no
		independence between \(B\) and the \(Z_j\), is required.
	\end{lemma}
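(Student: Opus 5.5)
The plan is a standard truncation argument applied to the layer-cake formula. Throughout, $Z_{\max}\mathbf 1_B\geq0$; the case $\alpha=0$ is trivial, since then $B$ is null and both sides are zero. So assume $\alpha\in(0,1]$. I will only use a union bound over the $N$ tails, so no dependence structure among the $Z_j$, or between $B$ and the $Z_j$, is needed. This is consistent with the remark in the statement.

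\emph{Step 1: union bound.} For every $t\geq0$,
\begin{align*}
\mathbb P(Z_{\max}>t)\leq \min\Bigl\{1,\,2N\exp\Bigl(-\frac{t^2}{2v^2}\Bigr)\Bigr\}.
\end{align*}

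\emph{Step 2: layer-cake with a threshold.} Fix $t_0\coloneqq v\sqrt{2\log(2N/\alpha)}$, which is well defined since $2N/\alpha\geq2$. Then
\begin{align*}
\mathbb E[Z_{\max}\mathbf 1_B]=\int_0^\infty \mathbb P(Z_{\max}\mathbf 1_B>t)\,dt\leq \int_0^{t_0}\mathbb P(B)\,dt+\int_{t_0}^\infty \mathbb P(Z_{\max}>t)\,dt .
\end{align*}
On $[0,t_0]$ I bound the integrand by $\mathbb P(B)\leq\alpha$, which gives the contribution $\alpha t_0=\sqrt2\,v\alpha\sqrt{\log(2N/\alpha)}$. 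On $[t_0,\infty)$ I drop $B$ and use the union bound.

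\emph{Step 3: Gaussian tail integral.} Write $t=t_0+u$ and use $(t_0+u)^2\geq t_0^2+u^2$. This gives
\begin{align*}
\int_{t_0}^\infty 2Ne^{-t^2/(2v^2)}dt\leq 2Ne^{-t_0^2/(2v^2)}\int_0^\infty e^{-u^2/(2v^2)}du=\alpha\, v\sqrt{\pi/2}.
\end{align*}
Since $\log(2N/\alpha)\geq\log2$, we have $v\alpha\leq (\log 2)^{-1/2}\,v\alpha\sqrt{\log(2N/\alpha)}$. So the tail piece is absorbed into the main term with a universal constant, and the claimed bound follows.

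The only step requiring care is choosing $t_0$ so that the tail piece scales like $\alpha$ rather than $1$. Matching $2Ne^{-t_0^2/(2v^2)}=\alpha$ is exactly what produces the factor $\log(2N/\alpha)$ and makes the two pieces balance. The rest is routine.
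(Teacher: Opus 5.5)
Your proof is correct and follows the paper's argument: the same union bound, the same layer-cake integral truncated at $t_0=v\sqrt{2\log(2N/\alpha)}$, and the same absorption of the tail piece via $\log(2N/\alpha)\geq\log 2$. The only difference is cosmetic: you bound the Gaussian tail integral with the shift $(t_0+u)^2\geq t_0^2+u^2$, which gives $\alpha v\sqrt{\pi/2}$, whereas the paper uses the Mills-ratio bound and gets $\alpha v^2/t_\alpha$.
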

	
	\begin{proof}
		The union bound gives
		\begin{align*}
			\mathbb P(Z_{\max}>t) &\leq 2N \exp\left(-\frac{t^2}{2v^2}\right).
		\end{align*}
		For \(\alpha>0\),
		\begin{align*}
			\mathbb E[ Z_{\max}\mathbf1_B ] &= \int_0^\infty \mathbb P(B\cap\{Z_{\max}>t\})\,dt \\
			&\leq \int_0^\infty \min\left\{\alpha,\, 2N\exp\left(-\frac{t^2}{2v^2}\right)\right\}\,dt.
		\end{align*}
		Set
		\begin{align*}
			t_\alpha &\coloneqq v \sqrt{ 2\log(2N/\alpha) }.
		\end{align*}
		Then
		\begin{align*}
			\mathbb E[ Z_{\max}\mathbf1_B ] &\leq \alpha t_\alpha + 2N \int_{t_\alpha}^\infty \exp\left(-\frac{t^2}{2v^2}\right)\,dt \\
			&\leq \alpha t_\alpha + \frac{\alpha v^2}{t_\alpha} \\
			& \lesssim v\alpha \sqrt{ \log(2N/\alpha) }.
		\end{align*}
	\end{proof}
	\begin{lemma}[Norm and spectral bounds for correction vectors]
		\label{lem:correction-vector-norm-bound}
		For every attainable signed model \((F,z)\in\mathcal C_n\),
		\begin{align*}
			\|a_{F,z}\|_2 &\leq \frac{2\lambda_{\mathrm L}}{\sqrt n}\sqrt{|F|\Gamma_F}.
		\end{align*}
		Moreover,
		\begin{align*}
			\|a_0\|_2 &\leq \frac{2\lambda_{\mathrm L}}{\sqrt n}\sqrt{s_0\Gamma_0},
		\end{align*}
		and, for every nonempty \(F\in\mathfrak F_n\cup\{S_0\}\) such that \(\Sigma_{n,F}\succ0\),
		\begin{align*}
			\Gamma_F
			&\leq \frac{1}{\lambda_{\min}\!(\Sigma_{n,F}+A_F^\top A_F)}
			\leq \frac{1}{\lambda_{\min}(\Sigma_{n,F})}.
		\end{align*}
	\end{lemma}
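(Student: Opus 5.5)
The argument is linear algebra on the deterministic vectors \(a_{F,z}\). For a nonempty attainable \((F,z)\), write \(C_F\coloneqq\Sigma_{n,F}+A_F^\top A_F\), which is positive definite as noted before the definition of \(a_{F,z}\). First I would compute the squared norm directly:
\begin{align*}
\|a_{F,z}\|_2^2
=\frac{4\lambda_{\mathrm L}^2}{n^2}\,z^\top C_F^{-1}X_F^\top X_F C_F^{-1}z
=\frac{4\lambda_{\mathrm L}^2}{n}\,z^\top C_F^{-1}\Sigma_{n,F}C_F^{-1}z .
\end{align*}
The matrix \(C_F^{-1}\Sigma_{n,F}C_F^{-1}\) is symmetric positive semidefinite, so its quadratic form is at most \(\Gamma_F\|z\|_2^2\) by the definition of the operator norm. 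Since \(z\in\{-1,1\}^{|F|}\), we have \(\|z\|_2^2=|F|\). Taking square roots gives the first bound. The empty case is trivial, because \(a_{\varnothing,\varnothing}=0\) and \(\Gamma_\varnothing=0\).

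The bound on \(a_0\) follows by the same computation with \(F=S_0\), \(z=z_0\) and \(\|z_0\|_2^2=s_0\). Here one only uses that \(C_{S_0}\) is nonsingular (assumed) and symmetric; positive definiteness is not needed for the norm identity. When \(S_0=\varnothing\), both sides vanish.

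For the spectral bound, assume \(\Sigma_{n,F}\succ0\), so that \(C_F\succeq\Sigma_{n,F}\succ0\). Since \(A_F^\top A_F\succeq0\), we have \(0\preceq\Sigma_{n,F}\preceq C_F\). Congruence by the symmetric matrix \(C_F^{-1}\) then gives
\begin{align*}
0\preceq C_F^{-1}\Sigma_{n,F}C_F^{-1}\preceq C_F^{-1}C_FC_F^{-1}=C_F^{-1}.
\end{align*}
Hence \(\Gamma_F\le\|C_F^{-1}\|_2=1/\lambda_{\min}(C_F)\). Finally, Weyl monotonicity gives \(\lambda_{\min}(C_F)\ge\lambda_{\min}(\Sigma_{n,F})\), which yields the last inequality.

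The only delicate point is justifying that the operator norm of a positive semidefinite matrix bounds its quadratic form. Beyond that, the main care needed is to treat the empty-set and \(S_0\) conventions consistently; \(S_0\) need not be attainable, and this is why nonsingularity of \(C_{S_0}\) is assumed separately.
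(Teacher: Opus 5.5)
Your proposal is correct and matches the paper's proof essentially step for step. You write $\|a_{F,z}\|_2^2=\frac{4\lambda_{\mathrm L}^2}{n}z^\top C_F^{-1}\Sigma_{n,F}C_F^{-1}z$, bound it by $\frac{4\lambda_{\mathrm L}^2}{n}\Gamma_F|F|$, repeat this for $a_0$, and get the spectral bound from the congruence $0\preceq C_F^{-1}\Sigma_{n,F}C_F^{-1}\preceq C_F^{-1}$ together with $\lambda_{\min}(C_F)\geq\lambda_{\min}(\Sigma_{n,F})$.
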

	
	\begin{proof}
		The bound is immediate for \((F,z)=(\varnothing,\varnothing)\), since \(a_{\varnothing,\varnothing}=0\) and \(\Gamma_\varnothing=0\). For every nonempty attainable signed model \((F,z)\in\mathcal C_n\),
		\begin{align*}
			\|a_{F,z}\|_2^2
			&= \frac{4\lambda_{\mathrm L}^2}{n}z^\top
			(\Sigma_{n,F}+A_F^\top A_F)^{-1}\Sigma_{n,F}
			(\Sigma_{n,F}+A_F^\top A_F)^{-1}z \\
			&\leq \frac{4\lambda_{\mathrm L}^2}{n}\Gamma_F\|z\|_2^2
			= \frac{4\lambda_{\mathrm L}^2}{n}|F|\Gamma_F.
		\end{align*}
		The bound for \(a_0\) follows by the same argument when \(S_0\neq\varnothing\), and is immediate from the conventions above when \(S_0=\varnothing\). For the spectral bound, let \(F\in\mathfrak F_n\cup\{S_0\}\) be nonempty with \(\Sigma_{n,F}\succ0\). Since \(A_F^\top A_F\succeq0\),
		\begin{align*}
			0\preceq\Sigma_{n,F}\preceq\Sigma_{n,F}+A_F^\top A_F.
		\end{align*}
		Because \(\Sigma_{n,F}+A_F^\top A_F\succ0\), congruence by
		\((\Sigma_{n,F}+A_F^\top A_F)^{-1}\) preserves the Loewner order and gives
		\begin{align*}
			0
			&\preceq
			(\Sigma_{n,F}+A_F^\top A_F)^{-1}
			\Sigma_{n,F}
			(\Sigma_{n,F}+A_F^\top A_F)^{-1} \\
			&\preceq
			(\Sigma_{n,F}+A_F^\top A_F)^{-1}.
		\end{align*}
		Therefore,
		\begin{align*}
			\Gamma_F
			&\leq
			\|
			(\Sigma_{n,F}+A_F^\top A_F)^{-1}
			\|_2 \\
			&=
			\frac{1}{\lambda_{\min}\!(\Sigma_{n,F}+A_F^\top A_F)}.
		\end{align*}
		Moreover,
		\begin{align*}
			\Sigma_{n,F}+A_F^\top A_F\succeq\Sigma_{n,F}
		\end{align*}
		implies
		\begin{align*}
			\lambda_{\min}\!(\Sigma_{n,F}+A_F^\top A_F)
			&\geq\lambda_{\min}(\Sigma_{n,F}),
		\end{align*}
		and hence
		\begin{align*}
			\Gamma_F
			&\leq
			\frac{1}{\lambda_{\min}\!(\Sigma_{n,F}+A_F^\top A_F)}
			\leq
			\frac{1}{\lambda_{\min}(\Sigma_{n,F})}.
		\end{align*}
	\end{proof}
	The final spectral inequality shows that lower bounds on sparse minimum eigenvalues remain convenient sufficient conditions for controlling the amplification factors. The quantity \(\Gamma_F\), however, is more targeted: it measures only the amplification that enters the generalized correction.
	
	\bibliographystyle{plainnat}
	\bibliography{references}
\end{document}